\documentclass[11pt]{article}
\usepackage{graphicx, amsmath, amsthm, amssymb} 
\usepackage{xcolor}
\usepackage{algpseudocode}
\usepackage[margin=1in]{geometry}
\usepackage{setspace}
\usepackage{multirow}

\usepackage{authblk}
\usepackage{complexity}

\usepackage{tikz}
\usetikzlibrary{shapes.arrows,chains,decorations.pathreplacing}
\usetikzlibrary{arrows, decorations.markings, shapes}
\usetikzlibrary{positioning}

\usepackage{hyperref}
\usepackage[capitalize, nameinlink, noabbrev]{cleveref}
\newcounter{mc}
\newtheorem{theorem}{Theorem}
\newtheorem*{theorem*}{Theorem}
\newtheorem{claim}{Claim}
\newtheorem{lemma}[mc]{Lemma}

\usepackage{todonotes}
\usepackage{xcolor}
\usepackage{soul}
\definecolor{jj}{HTML}{faacec}
\definecolor{vp}{HTML}{66b2b2}

\usepackage{tcolorbox}
\tcbuselibrary{breakable}
\usepackage[margin=1in]{geometry}

\newcommand{\OPT}{\mathsf{OPT}}
\renewcommand{\C}{\mathsf{C}}

\renewcommand{\M}{\mathcal{M}}
\renewcommand{\G}{\mathcal{G}}

\newcommand{\T}{\mathcal{T}}
\renewcommand{\H}{\mathcal{H}}
\renewcommand{\A}{\mathcal{A}}
\renewcommand{\S}{\mathcal{S}}
\newcommand{\F}{\mathcal{F}}
\newcommand{\Fbig}{\F_{\text{big}}}
\newcommand{\Fsmall}{\F_{\text{small}}}

\newcommand{\N}{\mathsf{N}}

\newcommand{\copt}{\C^{\infty}}
\newcommand{\rcopt}{R(\copt/\varphi)}
\newcommand{\Aearly}{\A_{\text{early}}}
\newcommand{\Alate}{\A_{\text{late}}}
\newcommand{\tC}{\widetilde{\C}}

\title{Tight Bounds for Online Scheduling\\ in the One-Fast-Many-Slow Machines Setting}
\author{John Jeang}
\author{Vladimir Podolskii}
\affil{Tufts University}

\begin{document}

\maketitle
\begin{abstract}
    In the One-Fast-Many-Slow decision problem, introduced by Sheffield and Westover~\cite{itcs_NSAW_2025}, a scheduler, with access to one fast machine and infinitely many slow machines, receives a series of tasks and must decide how to allocate the work among its machines. The goal here is to achieve the minimal overhead of an online algorithm over the optimal offline algorithm that knows all tasks and their arrival times in advance. Three versions of this setting were considered: Instantly-committing schedulers that must assign a task to a machine immediately and irrevocably, Eventually-committing schedulers whose assignments are irrevocable but can occur anytime after a task arrives, and Never-committing schedulers that can interrupt and restart a task on a different machine. Sheffield and Westover constructed online algorithms in all three settings. In the Instantly-committing model the optimal competitive ratio is equal to 2. In the Eventually-committing model, they give an algorithm achieving a 1.678 competitive ratio, while a lower bound of 1.618 was established by Kuszmaul and Westover in~\cite{SPAA_SPDP}. In the Never-committing model, Sheffield and Westover show that the optimal competitive ratio lies in the interval [1.366, 1.5]. In the latter two settings, the exact values of the optimal competitive ratios were left as an open problem, moreover Kuszmaul and Westover~\cite{SPAA_SPDP} conjectured that the lower bound in the Eventually-committing model is tight.
    
    In this paper we resolve this problem by providing tight bounds for the competitive ratios in the Eventually-committing and Never-committing settings. For the Eventually-committing model we prove Kuszmaul and Westover's conjecture by providing an algorithm that achieves a competitive ratio equal to the previous lower bound of $\frac{1+\sqrt{5}}{2}\approx 1.618$. Notably, our algorithm occasionally leaves the fast machine free even in the presence of unassigned tasks, as was shown to be necessary for any improvement by Sheffield and Westover. For the Never-committing model we provide an explicit Task Arrival Process (TAP) lower bounding the competitive ratio to the previous upper bound of 1.5. Unlike the TAP used to prove the previous lower bound, which uses only two tasks, we construct a series of TAPs with a growing number of tasks that approach the upper bound of 1.5.
    
\end{abstract}

\newpage

\section{Introduction}

Online algorithms deal with streaming data in which decisions are made without complete knowledge of the entire problem instance. An online algorithm's performance is measured relative to that of the optimal \textit{offline} algorithm that knows the entire problem instance prior to all decisions. Optimizing this metric, known as the \textit{competitive ratio}, for online algorithms has a wide range of applications to scheduling, networking, and learning theory \cite{vaze2023online}. 

Scheduling algorithms aim to efficiently distribute incoming tasks to optimize for some goal (e.g. makespan, tardiness, fairness). Efficient scheduling algorithms have been well-studied in both the offline and online regime for their potential application to a variety of settings such as cloud computing, service, and industrial environments~\cite{karger1999scheduling,pinedo2022scheduling,cloud_scheduling}. Sometimes in online scheduling, tasks can take advantage of specialized hardware to expedite processing. However, when resources are limited, it is unclear how to prioritize tasks. One of the approaches to study this problem was suggested by Sheffield and Westover~\cite{itcs_NSAW_2025} when they introduced the One-Fast-Many-Slow Decision Problem (OFMS). In OFMS, an algorithm, with access to a single instance of accelerated hardware and infinitely many weak processors, receives tasks, whose sizes and arrival times are unknown in advance, in an online stream and must allocate resources to minimize the completion of the set of tasks as a whole.

The OFMS problem is motivated by its relevance to the Serial to Parallel Decision Problem (SPDP) \cite{SPAA_SPDP}, which addresses multiprocessor scheduling with parallelizable tasks. The SPDP question concerns itself with the scheduling of perfectly parallelizable tasks on $p$ identical processors. Indeed, Sheffield and Westover reduce SPDP to OFMS in the limit where the number of processors greatly outnumbers the number of tasks. The reduction works by isolating a small number of processors to be used only for the serial implementation of tasks and treating the rest of the processors as a single high performance collection for tasks implemented in parallel; for more information we refer the reader to the paper \cite{itcs_NSAW_2025}.

In both settings the tasks are unknown ahead of time and arrive in an online stream and the authors analyze bounds on the competitive ratios of online algorithms. In the case of OFMS the following bounds are known:

\begin{itemize}
    \item \textbf{Instantly-committing schedulers:} Tasks must be assigned to a specific machine immediately upon arrival and cannot be moved. The matching lower~\cite{SPAA_SPDP} and upper~\cite{itcs_NSAW_2025} bounds of 2 on the competitive ratio are known.
    \item \textbf{Eventually-committing schedulers:} Tasks can be assigned any time after arrival, but cannot be moved. The best known lower bound is $\frac{1+\sqrt{5}}{2} \approx 1.618$~\cite{SPAA_SPDP} and the best known upper bound is $\approx 1.678$~\cite{itcs_NSAW_2025}.
    \item \textbf{Never-committing schedulers:} Tasks can be moved and restarted on a different machine. The best known lower and upper bounds are $\frac{1+\sqrt{3}}{2} \approx 1.366$ and $1.5$ respectively~\cite{itcs_NSAW_2025}.
\end{itemize}

Thus, although in the Instantly-committing model the optimal competitive ratio is known, its value in the other two settings were left in~\cite{itcs_NSAW_2025} as an open problem. Moreover, it is conjectured in \cite{SPAA_SPDP} that the lower bound in the Eventually-committing model is tight.

\paragraph*{Results.}

In this paper we resolve the open question pertaining to the optimal competitive ratio of an online algorithm in the OFMS problem in the Eventually-committing and Never-committing settings. Specifically we prove the conjecture made in \cite{SPAA_SPDP} by providing an optimal Eventually-committing scheduler achieving a competitive ratio of $\frac{1 + \sqrt5}{2} = \varphi$ meeting the previous lower bound from \cite{SPAA_SPDP}. In the Never-committing model, we give an explicit task arrival process that lower bounds the competitive ratio of any online algorithm to 1.5, matching the previous upper bound from \cite{itcs_NSAW_2025}. 

\paragraph*{Techniques.}

Our main technical contribution is providing an optimal $\varphi$-competitive online algorithm, in the Eventually-committing setting. In \cite{itcs_NSAW_2025}, the authors provide the optimal \textit{non-procrastinating} algorithm for the OFMS problem. That is, their algorithm is optimal among algorithms that do not delay task assignments when the fast machine is free. Their analysis consist of bounding the amount of time spent by their algorithm's fast machine on tasks assigned \textit{differently} from the optimal algorithm. At a high level, their algorithm does excellent for many Task Arrival Processes (TAP for short), but struggles in the case where it misallocates a small task that should have been held onto in wait of more information from the TAP. 

Our algorithm modifies the previous algorithm by occasionally holding onto tasks even when the fast machine is free. The main technical work is showing that this modification does not incur too much delay as to hinder the strengths of the previous algorithm, but provides significant enough delay such that tasks are assigned more accurately. The key insight is making this delay independent of a tasks arrival time and solely dependent on the size of the task. This allows us to bound the time at which tasks will receive additional delay thereby allowing it to perform similarly to the previous algorithm in \cite{itcs_NSAW_2025}, late in the TAP, but significantly differently at the beginning of the TAP. From here we can use similar techniques to bound the amount of time spent, by our algorithm's fast machine, on misallocated tasks.

In the Never-committing setting, we come up with an explicit sequence of TAPs lower bounding the competitive ratio of online algorithms in this model to the previous upper bound of 1.5. These TAPs, consist of a growing number of tasks, differing significantly from the two-task TAP giving the previous $\frac{1 + \sqrt{3}}{2}$ lower bound.

\paragraph*{Other related works.}

Much work has been done in the field of scheduling in both the offline and online setting. In the offline setting, in which an algorithm knows the entire problem instance, optimal scheduling has been shown to be $\NP$-hard in a variety of settings (see, e.g.,~\cite{du1989complexity,blazewicz1986scheduling, karp}), and progress has been made towards approximation algorithms~\cite{mounie1999efficient, ludwig1994scheduling, turek1992approximate}. In \cite{heterogeneous_scheduling}, they examine heterogeneous hardware in the offline setting, and \cite{bachtler2020robust} deals with robust single-machine scheduling where jobs are known in advance but the release date of a job is given as an interval, rather than a point, in which the job arrives.

The online setting has typically been explored in the context of unknown jobs and arrival times, dealing with either a single processor \cite{guo2024online} or multiple identical processors \cite{albers1997better}. In \cite{hurink2007online}, they consider the problem of scheduling jobs that may require multiple machines to process. The case of stochastic scheduling concerns itself with tasks coming from a known \textit{distribution} of sizes and arrival times ~\cite{gupta2017stochastic}. Finally the case of flexible constraints has been considered in both fuzzy scheduling as well as in case studies such as the nurse scheduling problem~\cite{dubois2003fuzzy, gutjahr2007aco, legrain2020online}.

\paragraph{Outline.}

The rest of the paper is organized as follows. In \cref{sec:prelims}, we state the problem and provide the notation needed for the analysis. In \cref{sec:eventuallycommit}, we give a $\varphi$-competitive online algorithm for the Eventually-committing setting of OFMS. The analysis in this section is the bulk of our technical work, so we first begin by proving some general properties of our algorithm and the OFMS problem, and then proceed to evaluating our algorithm's performance. Finally in \cref{sec:nevercommit}, we give a procedure for generating a family of TAPs that result in a tight lower bound of 1.5 for online algorithms in the Never-committing setting. 

\section{Preliminaries}
\label{sec:prelims}

\subsection{The Problem}

In this section we formally state the, previously introduced, \textbf{One-Fast-Many-Slow Decision Problem} (OFMS), where an online algorithm with access to infinitely many slow machines and a single fast machines must assign tasks to machines with the goal of minimizing the total completion time (makespan) of the set of tasks as a whole.

We first define a task $\tau_i = (f_i, s_i, t_i)$, representing the runtime on the fast machine, runtime on a slow machine, and arrival time respectively. We enforce that $t_i \geq 0$ and $s_i \geq f_i > 0$. An instance of OFMS is a Task Arrival Process (TAP) $\T = \{\tau_1, \dots, \tau_n\}$, with $t_j \geq t_i$ for $j > i$. An online scheduling algorithm $\M$ incrementally receives portions of $\T$ upon the arrival of tasks. That is, $\M$ learns $\tau_i$ at $t_i$, whereas and offline scheduling algorithm $\N$ receives the entirety of $\T$ at the start.

A scheduling algorithm's goal is to assign tasks to be run on machines to minimize the completion time of $\T$. The completion time is defined to be the minimum time $t$ at which all tasks in $\T$ have finished running. 

In the OFMS problem, each machine can only run a single task at a time. A machine is said to be free in the \textit{closed} intervals in which it is not running any tasks (and thus there is a notion of ``the last time a machine is free"). We are interested in two previously defined models for this problem: 

\begin{enumerate}
    \item \textbf{Eventually-committing schedulers}: A scheduler may hold on to a task without immediately assigning/starting it, but once the task begins running, the assignment cannot be changed.

    \item \textbf{Never-committing schedulers}: A scheduler may hold on to a task without immediately assigning/starting it, and even after the task begins running, the task can be moved to start over on a different machine.
\end{enumerate}

In \cite{itcs_NSAW_2025}, Westover and Sheffield also define \textit{Instantly-committing} schedulers in which a scheduler must commit and queue a task to a particular machine upon arrival. In this case the competitive ratio of the optimal online algorithm is known to be 2.

We note that there are several algorithms, with possibly different assignments, that solve the offline version of OFMS. We describe one such algorithm, $\OPT$, that we'll use for the remainder of the paper.

Given an OFMS problem instance $\T$, $\OPT$ first calculates the completion time of the assignment, $q_0$, that allocates all tasks in $\T$ to the fast machine. Then for every task $\tau_i \in \T$, $\OPT$ calculates the completion time of an assignment $q_i$ defined as follows: for all $\tau_j \in \T$,

\begin{equation} \text{ $\tau_j$'s assignment} = 
    \begin{cases}
        \text{a slow machine if $s_j + t_j \leq s_i + t_i$} \\
        \text{the fast machine otherwise,}
    \end{cases}
\end{equation}

Let $K(q_i)$ denote the completion time of assignment $q_i$. $\OPT$ then simply uses the assignments associated with the minimum completion time $c = \min_i(K(q_i))$. Intuitively, $\OPT$ assigns a task to the slow machine whenever it can do so without increasing the completion time of $\T$. To see that this algorithm produces an optimal assignment, consider an optimal assignment $\G$. If $\G$ assigns all tasks in $\T$ to the fast machine then this is exactly the $q_0$ assignment, otherwise consider $\G$'s last-ending slow machine, working on $\tau_k$, that ends at $s_k + t_k$. First note that all of the slow machines in the $q_k$ assignment will finish before $s_k + t_k$. Second, notice that $\G$'s fast machine will contain every task that is on $q_k$'s fast machine because it contains every task $\tau_i$ such that $s_i + t_i > s_k +t_k$. Thus $q_k$'s fast machine finishes at least as quickly as $\G$'s fast machine and so $q_k$ must be an optimal assignment.

Importantly, in this algorithm we have that for all tasks $\tau_j$ such that $s_j + t_j \leq c$, $\OPT$ assigns $\tau_j$ to a slow machine, and we make frequent use of this property in our proofs.

\subsection{Notation}
 
At every given point in time $t$, there is an optimal offline allocation of just the subset of tasks in $\T$ that have arrival time less than or equal to $t$. We use $\OPT(t)$ to denote the optimal offline assignment of this subset of tasks and we use $\OPT(\infty)$ for the optimal offline assignment of the whole $\T$. We refer to the completion time of $\OPT(t)$ as $\C^t$, and to the total completion of $\OPT(\infty)$ as $\copt$. We sometimes drop the argument and use $\OPT$ to denote the final assignment when it is clear from context. Furthermore, we use $\tC^t$ to refer to the runtime of $\OPT(t)$'s fast machine. Finally, when talking about the optimal offline algorithm without reference to a specific time we just use $\OPT$. For an online algorithm $\M$, we use $\M_f$ to denote the runtime of $\M$'s fast machine after the entire TAP has arrived.

$R(P)$ will denote the earliest time at which an online algorithm $\M$, knows that the optimal offline algorithm must take at least $P$ time (e.g. by calculating $\C^t)$.  

For ease of communication, we define a notion of \textit{largeness}, where we say that a task $\tau_x$ is larger than $\tau_y$ if and only if $s_x + t_x > s_y + t_y$. Note that this notion does not care about the time it takes the tasks to run on the fast machine.

\section{Eventually-committing scheduler}
\label{sec:eventuallycommit}

\subsection{A $\varphi$-Competitive Eventually-committing Scheduler}

We define $\varphi = \frac{1 + \sqrt{5}}{2}$, the golden ratio. Throughout the paper we use the fact that $\varphi$ is the positive solution to the equation 
$\varphi = 1 + \frac{1}{\varphi}$.

In this section we provide a $\varphi$-competitive algorithm for the eventually-committing setting of OFMS. In \cite{itcs_NSAW_2025}, Sheffield and Westover define \textit{non-procrastinating} algorithms as those that only delay the machine assignment of tasks if the fast machine is occupied. They show that a competitive ratio of $\approx1.678$ is optimal for non-procrastinating algorithms, and thus to achieve a competitive ratio of $\varphi \approx 1.618$ our algorithm needs to occasionally delay assigning a task to a machine even if the fast machine is free. To do so we first define the notion of an \textit{eligible} task. A task $\tau_i$ is eligible at time $t$ if 
    \begin{equation}
     t \geq \frac{f_i}{\varphi}.   
    \end{equation}
Our algorithm will only assign eligible tasks to the fast machine and therefore may sometimes procrastinate. 

Finally we note that an online algorithm $\M$ can (efficiently) calculate $\OPT(t)$ and $\C^t$ at time $t$, by running the algorithm for $\OPT$ on the tasks that have arrived at or before time $t$. This gives $\M$ a lower bound for $\copt$ at time $t$. We say that $\M$ \textit{safely} assigns a task to a slow machine if the finishing time of that machine is guaranteed to be within the $\varphi$-competitive ratio with the completion time of the optimal offline algorithm.

We now give the formal description of our Eventually-committing scheduler $\H$.

\paragraph{Algorithm $\H$:} At $t_0=0$, $\H$ initializes an empty standby set $\S$. Whenever a task $\tau_i$ arrives, it is immediately added to $\S$. Additionally $\H$ continuously evaluates all tasks in $\S$ at all times $t$ as follows:
\begin{enumerate}

    \item If $s_i + t \leq \varphi \C^t$, then start $\tau_i$ on an unused slow machine and remove it from $\S$.
    \item If \textit{all} of the following conditions hold: 
    
    \begin{enumerate}
        \item the fast machine is free or just finished a task,
        \item $\tau_i$ has the largest $s_i + t_i$ value among all tasks in $\S$, 
        \item $t \geq \frac{f_i}{\varphi}$ ($\tau_i$ is eligible),
    \end{enumerate}
    then start $\tau_i$ on the fast machine and remove it from $\S$.
\end{enumerate}

We refer to $\S$ as the \textit{standby} set, and denote $\S_t$ to be the set of task on standby at time $t$.

At a high level, $\H$ first does a ``slow-check" to see if a task can safely be assigned to a slow machine. If this check fails, $\H$ then does a ``fast-check" to see if the task can be assigned to the fast machine. We will frequently use these check conditions to characterize the performance of $\H$.

Although $\H$ makes these checks continuously, we observe the checks need only to be made during specific events. Notice that the slow-checks only need to occur upon the arrival of a task, because $s_i + t$ only increases in $t$ while $\varphi\C^t$ does not change in between the arrival of tasks. The fast-checks only need to occur upon the arrival of a task, when a task on the fast machine finishes, or when the largest task in $\S$ becomes eligible. Therefore, the algorithm can actually be made discrete without changing its behavior because there are only finitely many events warranting checks. This observation allows us to pick $\delta$ small enough such that for any event happening at time $t$, there are no events between $t-\delta$ and $t$. From this we can use the notation 

\begin{equation}
    t^- = t - \delta
    \label{eq:tminus}
\end{equation}
for what can be thought of as ``the time just before $t$".

The eligibility condition, (c) in the fast-check, is what causes $\H$ to sometimes procrastinate. However notice that eligibility is concerned with the absolute time $t$ but not the arrival time $t_i$ of a particular task $\tau_i$. Intuitively, this is how $\H$ manages to procrastinate mostly at the beginning of the TAP, but not too much near the end of the TAP.

Further notice that the eligibility condition is not present in the slow-check. This is because there are infinitely many slow machines, so we are not worried about keeping any of them free and thus $\H$ simply runs tasks on slow machines whenever it can safely do so. 

Finally, we emphasize that the conditions in the fast check must be met simultaneously. To run a task $\tau_i$ on the fast machine at time $t$, it is not sufficient for $\tau_i$ to merely be the largest among all eligible tasks in $\S_t$; it must be the largest among all tasks in $\S_t$. If there is a largeness tie, and at least one task participating in the tie is eligible, then $\H$ assigns one of the eligible tasks to the fast machine, breaking ties arbitrarily among the eligible tasks participating in the tie. 

At this point it is also useful to introduce the following sets of tasks that will help us with our analysis. For a given TAP $\T$ we define the following:

\begin{itemize}
    \item $\A$ (``Actual"): The set of tasks that both $\H$ and $\OPT(\infty)$ assign to the fast machine;\,
    \item $\F$ (``Fake"): The set of tasks that $\H$ assigns to the fast machine but $\OPT(\infty)$ assigns to a slow machine;\,
    \item $\F_{\text{big}} \subset \F$: The set of tasks in $\F$ such that $s_i + t_i > \frac{\copt}{\varphi}$;
    \item $\F_{\text{small}} \subset \F$: The set of tasks in $\F$ such that $s_i + t_i \leq \frac{\copt}{\varphi}$.
\end{itemize}

To avoid confusion, it is also important to note some subtleties. Firstly, we note that $\A \cup \F$ does not capture any of the tasks that $\H$ assigns to its slow machine; most of the time we simply do not care about these tasks since they finish within a $\varphi$ competitive ratio of $\OPT(\infty)$'s completion time, by the design of the algorithm. However, there is no explicit mechanism that prohibits the possibility of a task $\tau$ from being assigned to $\OPT(\infty)$'s fast machine, while being assigned to the slow machine by $\H$. While these tasks are not relevant for most of our analysis, it is important to remember that they exist since they can show up in standby set $\S$ and affect $\OPT$'s fast machine.

Finally, we note that since $\H$ only safely assigns tasks to slow machines, then to prove that $\H$ is $\varphi$-competitive, it suffices to consider just the performance of its fast machine.

\subsection{Initial Observations and Properties}

Before analyzing the performance of our algorithm, it is useful to familiarize ourselves with some of the basic properties of $\H$, $\OPT$, $\Fbig$, $\Fsmall$ and $\A$ that will greatly simplify our analysis. We begin by proving some properties of $\OPT$. 

Note that for $t_1 < t_2$ it is possible for $\OPT(t_1)$ to assign a task $\tau_i$ to the fast machine while $\OPT(t_2)$ assigns $\tau_i$ to a slow machine (note that $\OPT$ is an offline algorithm, and so this swapped assignment is retroactive and thus effective at $t_i$ and not $t_2$). This is because $\OPT(t_2)$ may benefit from freeing up space on the fast machine for other tasks arriving at $t_1 < t \leq t_2$. In this case we say that $\OPT$ has \textit{swapped} $\tau_i$'s assignment. This next lemma tells us that tasks only swap from the fast machine to a slow machine.

\begin{lemma}
    If $\tau_i$ is assigned to a slow machine by $\OPT(t_1)$, then it will be assigned to a slow machine by $\OPT(t_2)$ for all $t_2 > t_1$.
    \label{lem:noslowswap}
\end{lemma}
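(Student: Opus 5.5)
The plan is to reduce the lemma to two facts: the threshold property of $\OPT$ stated just before the lemma, and the monotonicity of $\C^t$ in $t$.

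First, I will show that if $\OPT(t_1)$ assigns $\tau_i$ to a slow machine, then $s_i + t_i \le \C^{t_1}$. This holds because the slow machine running $\tau_i$ finishes exactly at $t_i + s_i$. The completion time $\C^{t_1}$ of $\OPT(t_1)$ is the time by which all of its machines are done, so it is at least $t_i + s_i$.

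Second, I will show that $\C^{t}$ is nondecreasing in $t$. For $t_1 < t_2$, the tasks that have arrived by $t_1$ are a subset of those that have arrived by $t_2$. Take the assignment $\OPT(t_2)$ and delete the tasks with arrival time in $(t_1, t_2]$. This gives a valid assignment of the earlier task set, and its completion time is no larger:
\begin{itemize}
\item every slow machine finishes at the same time or is removed;
\item the fast machine runs a subset of the same tasks in the same order, so each remaining task can start no later than before.
\end{itemize}
Hence $\C^{t_1} \le \C^{t_2}$. Combining the two facts gives $s_i + t_i \le \C^{t_1} \le \C^{t_2}$.

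Third, I will invoke the property noted after the definition of $\OPT$: every task $\tau_j$ with $s_j + t_j \le c$, where $c$ is the completion time of the computed assignment, is placed on a slow machine. Applying it to $\OPT(t_2)$, whose completion time is $c = \C^{t_2}$, places $\tau_i$ on a slow machine, which proves the lemma.

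The main obstacle is that this threshold property depends on how $\OPT$ breaks ties among the minimizing assignments $q_k$. If needed, I would justify it by making the tie-breaking explicit: among all $q_k$ attaining the minimum, choose the one with the largest threshold $s_k + t_k$.
\begin{itemize}
\item Suppose, for contradiction, that the chosen assignment puts some task $\tau_j$ with $s_j + t_j \le c$ on the fast machine. Then $s_j + t_j$ exceeds the chosen threshold.
\item Consider $q_j$, which uses the threshold $s_j + t_j$. It moves only tasks whose slow finish time is at most $s_j + t_j \le c$ onto slow machines, so every slow machine finishes by $c$.
\item Its fast machine runs a subset of the previous fast tasks, so it also finishes by $c$.
\item So $q_j$ also attains the minimum but has a strictly larger threshold, contradicting the choice.
\end{itemize}
The $q_0$ case is handled the same way, since moving any such task off the fast machine cannot hurt.
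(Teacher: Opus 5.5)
Your proof is correct and follows the same route as the paper: the chain $s_i + t_i \le \C^{t_1} \le \C^{t_2}$, combined with the property that $\OPT$ places every task with $s_j + t_j \le \C^{t}$ on a slow machine. Your task-deletion argument for monotonicity and your explicit largest-threshold tie-break supply justifications the paper leaves implicit; without such a tie-break, the threshold property can fail when several $q_k$ attain the minimum completion time.
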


\begin{proof}
    Consider a task $\tau_i$ assigned to a slow machine by $\OPT(t_1)$. Let $t_2$ be any time greater than $t_1$. We have

    \begin{equation}
        s_i + t_i \leq \C^{t_1} \leq \C^{t_2},
        \label{eq:eqslow}
    \end{equation}
    where the first inequality comes from the fact that $\OPT(t_1)$'s completion time must be at least the completion time of $\tau_i$. The second inequality comes from the fact that all task parameters are non-negative and thus $\C^t$ is monotone in $t$.

    Since $\OPT(t)$ assigns a task $\tau_i$ to a slow machine whenever $s_i + t_i \leq \C^t$, we conclude that $\tau_i$ must be assigned to a slow machine by $\OPT(t_2)$.
\end{proof}

\cref{lem:noslowswap} helps us further characterize some of $\OPT$'s behavior. 

\begin{lemma}
    Each task $\tau_i \in \A \cup\F$ is assigned to the fast machine by $\OPT(t_i)$. 
    \label{lem:AFstartfast}
\end{lemma}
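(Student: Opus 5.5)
We argue by contradiction, combining the slow-check of $\H$ with \cref{lem:noslowswap}. Let $\tau_i \in \A \cup \F$, so $\H$ eventually starts $\tau_i$ on the fast machine. Suppose, for contradiction, that $\OPT(t_i)$ assigns $\tau_i$ to a slow machine. By the defining property of $\OPT$ (it sends to a slow machine exactly the tasks that are not larger than the critical task, and its completion time is at least the completion time of every task it runs), this gives $s_i + t_i \leq \C^{t_i}$.

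The key step is to show that $\H$ would then have started $\tau_i$ on a slow machine at time $t_i$. When $\tau_i$ arrives, it is placed in $\S$, and $\H$ evaluates the slow-check at that moment with $t = t_i$. The condition to verify is $s_i + t_i \leq \varphi \C^{t_i}$. Since $\varphi > 1$ and $\C^{t_i} \geq 0$, we have
\[
s_i + t_i \leq \C^{t_i} \leq \varphi \C^{t_i},
\]
so the condition holds. Hence $\H$ starts $\tau_i$ on a slow machine immediately upon arrival and removes it from $\S$.

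This contradicts $\tau_i$ being assigned to the fast machine by $\H$, since an eventually-committing scheduler never moves a task once it has started. Therefore $\OPT(t_i)$ must assign $\tau_i$ to the fast machine.

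The only step needing care is the ordering of checks at the arrival instant. We must confirm that the fast-check cannot place $\tau_i$ on the fast machine at time $t_i$ before the slow-check fires. According to the description of $\H$, the slow-check is performed first (``If this check fails, $\H$ then does a fast-check''), so this is guaranteed; we would state this explicitly in the proof.

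\cref{lem:noslowswap} is not needed for the contradiction itself. It does explain why the time $t_i$ is the natural one to look at: if $\OPT(t_i)$ already put $\tau_i$ on a slow machine, then every later $\OPT(t)$ would too.
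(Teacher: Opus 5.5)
Your proof is correct. For tasks in $\F$ it is exactly the paper's argument: if $\OPT(t_i)$ put $\tau_i$ on a slow machine, then $s_i + t_i \le \C^{t_i} \le \varphi\C^{t_i}$. The slow-check would therefore fire at arrival, and $\H$ could never run $\tau_i$ on its fast machine.

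The difference is in the $\A$ case. There the paper does not use $\H$ at all and argues purely about $\OPT$: $\OPT(\infty)$ puts $\tau_a$ on the fast machine, and by \cref{lem:noslowswap} $\OPT$ never swaps a task from slow to fast, so $\OPT(t_a)$ must already have it on the fast machine. You instead observe that $\H$ also runs $\A$-tasks on its fast machine. This lets the same slow-check contradiction cover both sets at once, and \cref{lem:noslowswap} becomes unnecessary.

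Your route is shorter and uniform. It also shows more generally that every task $\H$ ever starts on its fast machine was on $\OPT$'s fast machine at its arrival time, whatever $\OPT(\infty)$ later does. The paper's route for $\A$ buys something different. It is independent of $\H$'s behavior, and applied at every $t \ge t_a$ it yields the stronger fact that $\A$-tasks remain on $\OPT(t)$'s fast machine for all $t \ge t_a$. The paper relies on that fact later, in Case 2 of the proof of \cref{thm:competitive}.

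Your explicit remark that the slow-check takes precedence over the fast-check at the arrival instant is a point the paper leaves implicit.
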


\begin{proof}
    First consider a task $\tau_a \in \A$. We know that $\OPT(\infty)$ assigns $\tau_a$ to the fast machine, and from \cref{lem:noslowswap} we know that $\OPT$ does not swap task assignments from slow to fast. Therefore $\tau_a$ must be assigned to the fast machine the entire time since its arrival at $t_a$.

    Now consider a task $\tau_z \in \F$. Suppose for the sake of contradiction that $\OPT(t_z)$ assigns $\tau_z$ to the slow machine. Then during $\H$'s slow-check we have 

    \begin{equation}
        \frac{s_z + t_z}{\C^{t_z}} \leq \frac{s_z + t_z}{\underbrace{s_z + t_z}_{\text{$\OPT$'s completion of $\tau_z$}}} = 1 < \varphi.
    \end{equation}
    
    This means that $\H_f$ would also assign $\tau_z$ to a slow machine at $t_z$. However this is a contradiction since $\H$ eventually assigns all tasks in $\F$ to the fast machine, and $\H$ is an Eventually-committing scheduler that cannot change the assignment of tasks. Therefore $\OPT(t_z)$ assigns $\tau_z$ to the fast machine.
\end{proof}

Finally, in our next lemma, we build a classification of the tasks assigned to $\H$'s fast machine.

\begin{lemma}

    For any $\tau_a \in \A$, $\tau_b \in \Fbig$, and $\tau_m \in \Fsmall$ we have the following:

    \begin{enumerate}
        \item $s_m + t_m \leq \frac{\copt}{\varphi} < s_b + t_b \leq \copt <  s_a + t_a$,
        \item $f_b \leq \frac{\copt}{\varphi}$, 
        \item $f_m \leq \frac{\copt}{\varphi^2}$.
    \end{enumerate}
    \label{lem:setprops}
\end{lemma}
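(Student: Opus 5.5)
Part 1 follows from the definitions together with the structure of $\OPT$. The leftmost inequality $s_m+t_m\le \copt/\varphi$ and the strict inequality $\copt/\varphi < s_b+t_b$ are exactly the definitions of $\Fsmall$ and $\Fbig$. The bound $s_b+t_b\le\copt$ holds because $\tau_b\in\F$ is placed on a slow machine by $\OPT(\infty)$, so its completion $s_b+t_b$ cannot exceed $\copt$. For the final strict inequality I will use the property of $\OPT$ noted earlier: every task with $s_j+t_j\le \copt$ is assigned to a slow machine by $\OPT(\infty)$. Since $\tau_a\in\A$ is on $\OPT(\infty)$'s fast machine, the contrapositive gives $s_a+t_a>\copt$.

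For parts 2 and 3 the idea is to exploit the slow-check at time $t_i$, which fails for every task in $\F$. Take $\tau_z\in\F$ (either $\tau_b$ or $\tau_m$). By \cref{lem:AFstartfast}, $\OPT(t_z)$ puts $\tau_z$ on its fast machine, and the fast machine cannot start $\tau_z$ before $t_z$, so $\C^{t_z}\ge t_z+f_z$. Because $\H$ did not send $\tau_z$ to a slow machine at $t_z$, the slow-check failed: $s_z+t_z>\varphi\C^{t_z}\ge \varphi(t_z+f_z)$. Combined with $s_z\ge f_z$ this is not yet enough, since I need an upper bound on $f_z$ in terms of $\copt$. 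The cleaner route is $\varphi f_z\le \varphi(t_z+f_z)<s_z+t_z$, so $f_z<(s_z+t_z)/\varphi$.

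Substituting the bounds from part 1 then finishes. For $\tau_b$ we get $f_b<(s_b+t_b)/\varphi\le\copt/\varphi$. For $\tau_m$ we get $f_m<(s_m+t_m)/\varphi\le \copt/\varphi^2$. Both are even strict, which is stronger than the statement.

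The main point to get right is the inequality $\C^{t_z}\ge t_z+f_z$. This requires that $\OPT(t_z)$ really has $\tau_z$ on the fast machine, which is supplied by \cref{lem:AFstartfast}. It also relies on $\OPT(t_z)$ covering tasks that arrived at exactly $t_z$, which matches the "less than or equal to $t$" convention. Everything else is direct substitution.
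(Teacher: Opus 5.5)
Your proof is correct and follows the paper's argument essentially step for step. Part 1 comes from the definitions plus the fact that $\OPT(\infty)$ slow-assigns every task with $s_j+t_j\le\copt$, and parts 2--3 come from the failed slow-check at $t_z$ combined with $\C^{t_z}\ge t_z+f_z$, then substituting part 1. The only cosmetic difference is that you stop at $f_z<(s_z+t_z)/\varphi$, whereas the paper extracts the slightly stronger $f_z<s_z/\varphi$ from the same inequality; either suffices, and $\C^{t_z}\ge t_z+f_z$ holds even without \cref{lem:AFstartfast} since $s_z\ge f_z$.
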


\begin{proof}
    The first two inequalities from item 1  come from the definition of $\Fsmall$ and $\Fbig$. The third inequality from item 1 comes from the fact that $\tau_b$ is ultimately put on one of $\OPT$'s slow machines and the completion times of all of $\OPT$'s slow machines are bounded by $\OPT$'s total completion time. Finally the last inequality from item 1 comes from the fact that $\tau_a$ ultimately goes on $\OPT(\infty)$'s fast machine, and therefore must be unable to go on $\OPT$'s slow machine without increasing the completion time since $\OPT$ puts tasks on a slow machine whenever possible.
    
    To see items 2 and 3, it suffices to show that for any $\tau_i \in \Fbig \cup \Fsmall$ we have that $f_i < \frac{s_i}{\varphi}$, and then we can just apply item 1. Consider $\tau_i \in \Fbig \cup \Fsmall$. To show $f_i < \frac{s_i}{\varphi}$, we recall that by  \cref{lem:AFstartfast} we know $\OPT(t_i)$ assigns $\tau_i$ to the fast machine, but $\H$ doesn't assign $\tau_i$ to a slow machine at this time so we have 
    
    \begin{equation}  
      \varphi < \frac{s_i + t_i}{\C^{t_i}} \leq \frac{s_i + t_i}{\tC^{t_i}} \leq \frac{\overbrace{s_i + t_i}^{\tau_i \text{ on $\H$'s slow machine}}}
    {\underbrace{f_i + t_i}_{\text{minimum time on $\OPT$'s fast machine}}}
    \implies f_i \leq \frac{s_i}{\varphi}.
    \label{eq:slowcheckfail}
    \end{equation}
\end{proof}

With these initial observations in hand we can proceed to the main analysis.

\subsection{Analysis}

Our goal is to prove the following theorem.

\begin{theorem}\label{thm:competitive}
    $\H$ is a $\varphi$-competitive Eventually-committing scheduler.
\end{theorem}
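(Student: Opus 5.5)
Since every slow-machine assignment made by $\H$ is safe (at assignment time $t$ we have $s_i+t\le\varphi\C^t\le\varphi\copt$), it suffices to show that $\H$'s fast machine finishes by $\varphi\copt$, i.e. $\H_f\le\varphi\copt$. The plan is to fix the last moment the fast machine is idle (or procrastinating) before its final busy stretch and bound the length of that stretch. Let $t^*$ be the last time at which the fast machine is free and no task is started on it because of a failed fast-check. From $t^*$ onward the fast machine runs continuously, so
\[
\H_f = t^* + \sum_{\tau_i \text{ run after } t^*} f_i .
\]
We then split the tasks run after $t^*$ into those in $\A$, $\Fbig$ and $\Fsmall$.

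\textbf{Step 1: idleness at $t^*$.} At time $t^*$, either the standby set is empty or its largest task is not yet eligible. In the first case every task run after $t^*$ arrives after $t^*$. Since $\A$-tasks are on $\OPT$'s fast machine, $\sum_{\A} f_i\le \copt - t^*$ roughly. The remaining budget must absorb the $\F$ tasks. In the second case, $t^*<f_j/\varphi$ for the largest standby task $\tau_j$. By \cref{lem:setprops}, if $\tau_j\in\F$ then $f_j\le\copt/\varphi$, so $t^*\le\copt/\varphi^2$; if $\tau_j\in\A$ then $t^*<f_j/\varphi\le\copt/\varphi$. Either way, $t^*$ is small. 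This is the payoff of the eligibility rule: procrastination only happens early.

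\textbf{Step 2: bounding $\F$-work after $t^*$.} I would show that at most one $\Fbig$ task, call it $\tau_b$, can run after $t^*$ together with $\A$ tasks. The argument is that once an $\Fbig$ task is on standby with $s_b+t_b>\copt/\varphi$, any later task arriving at time $t$ either has $\C^t$ already large enough that $\tau_b$-sized tasks pass the slow-check, or is larger. Also, when $\tau_b$ is started at time $t\ge f_b/\varphi$, the slow-check failed at arrival, so $\varphi\C^{t_b}<s_b+t_b\le\copt$. Therefore $\R(\copt/\varphi)$ exceeds $t_b$.

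For $\Fsmall$ tasks, I would use that they are scheduled only when they are the largest on standby. That happens only when no $\A$ task, which is larger by \cref{lem:setprops}, is waiting. Moreover $s_m+t_m\le\copt/\varphi$ together with $f_m\le\copt/\varphi^2$ means they must be started before $\rcopt$, since afterwards they would pass the slow-check as $s_m+t\le \varphi\C^t$ eventually. So the $\Fsmall$ work is confined to a window that I would charge against $t^*$ or against the idle time of $\OPT$'s fast machine.

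\textbf{Step 3: combining.} The target inequality is
\[
t^* + \sum_{\A} f_i + \sum_{\F\text{ after }t^*} f_i\le \varphi\copt,
\]
using $\sum_\A f_i\le\copt-(\text{earliest arrival among them})$ and the $\F$ bounds $\copt/\varphi$ (big) and $\copt/\varphi^2$ (small). Together with $\varphi=1+1/\varphi$, the budget $\copt+\copt/\varphi$ works when only one $\F$ task of the relevant kind intrudes.

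\textbf{Main obstacle.} The main obstacle is Step 2 in the mixed case: after $t^*$, $\F$ tasks interleave with $\A$ tasks that arrive later, and the $\A$ tasks' arrival times must be related to $t^*$ and to when $\F$ tasks started. I would first try a charging argument. Each $\F$ task $\tau_z$ started at time $t$ satisfies $t\ge f_z/\varphi$. Each $\A$ task arriving after $t$ forces $\OPT$'s fast machine to run from at least its arrival time. From this I would derive $\copt\ge t+\sum_{\A\text{ after }t} f_i$, and then show that the total $\F$ work after $t^*$ is at most $\copt/\varphi$, splitting on whether $\rcopt$ occurs before or after $t^*$.
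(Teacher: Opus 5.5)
\textbf{There is a genuine gap.} Your reduction to bounding $\H_f$ and your use of the fact that $\F$-tasks arrive before $\rcopt$ are fine. However, the two structural claims in Step 2, on which Step 3 rests, are false.

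\emph{First claim: at most one $\Fbig$ task runs after $t^*$.} Take three copies of $(0.1,0.75,0.2)$ and one task $(0.51,L,0.49)$ with $L$ large. At $t=0.2$ we have $\C^{0.2}=0.5$ and the slow-check fails ($0.95>0.5\varphi$). The three tasks are eligible and run back to back on $[0.2,0.5]$. At $0.49$ the last task makes $\copt=1$: the three small tasks go to slow machines and finish at $0.95>1/\varphi$, so all three lie in $\Fbig$. The big task then runs on $[0.5,1.01]$, and there is no idle time after $0.2$.

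\emph{Second claim: $\Fsmall$ tasks must start before $\rcopt$.} Take $(0.25,L,0)$, $\tau_m=(0.05,0.5,0)$, $(0.14,L,0.4)$ and $(0.25,L,0.6)$. Then $\C^0=0.3$, $\C^{0.4}=0.54$ and $\copt=0.85$, so $\rcopt=0.4$. The task $\tau_m$ fails the slow-check at $0$ and again at $0.4$ (since $0.9>0.54\varphi$). It is blocked first by the ineligible first task, which becomes eligible only at $0.25/\varphi\approx0.155$ and then runs until about $0.405$, and next by the third task. It therefore starts at about $0.545>\rcopt$. Yet $s_m+t_m=0.5\le\copt/\varphi$, and $\OPT$ sends it to a slow machine (paper's convention: every $\tau_j$ with $s_j+t_j\le\copt$ goes slow). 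Your reasoning breaks because at $\rcopt$ you only know $\varphi\C^t\ge\copt$, while $s_m+t$ may exceed $\copt$. This is exactly why the paper needs \cref{lem:nolatesmall}, which shows such late starts are harmless, not impossible.

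More fundamentally, the budget in Step 3 cannot close without a mechanism that trades procrastination time against the other loads. If $\A$-tasks are already waiting at $t^*$, their total is bounded only by $\copt$ minus their earliest arrival. So the idle time $t^*$, which Step 1 bounds only by $\copt/\varphi$, is not absorbed, and adding up to $\copt/\varphi$ of $\F$-work overshoots $\varphi\copt$. Step 1 also overlooks that the ineligible largest task may be one that $\H$ later sends to a slow machine, so it lies in neither $\A$ nor $\F$.

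The paper supplies the missing couplings, case by case:
\begin{itemize}
\item A blocker $\tau_w\notin\A$ sits on $\OPT$'s fast machine. Hence the $\A$-work is at most $\copt-f_w$ while the idle time is below $f_w/\varphi$ (\cref{lem:weirdgood}).
\item An ineligible $\A$-task $\tau_a$ forces, through the failed slow-checks of the waiting $\Fbig$-tasks, $f_a<\varphi(\varepsilon+\Sigma_\beta-\Sigma_{\beta^{t_a}})$. So idle time is paid for by $\Fbig$-work that has already left standby (\cref{lem:early_eligible}, used in \cref{lem:lategaps}).
\item If the fast machine never idles after $\rcopt$ and the task $\tau_u$ running at $\rcopt$ is in $\Fsmall$, the failed slow-checks of the $\Fbig$-tasks arriving after $\tau_u$ starts bound the $\A$-work arriving during $\tau_u$'s run by $f_u/\varphi$ plus the slack $\Delta$ left on $\OPT$'s fast machine before $\rcopt$.
\end{itemize}
Your proposed charging inequality $\copt\ge t+\sum_{\A\text{ after }t}f_i$ is only the trivial bound and supplies none of these couplings. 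So, as you suspected, the core of the argument is missing.
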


Recall that we only have to concern ourselves with the performance of $\H$'s fast machine. Our first strategy for analyzing $\H_f$ will be to characterize the tasks from $\F$. These can be thought of as misassigned ``extra work" that contribute to $\H_f$ since the optimal offline algorithm assigns them to slow machines once the entire TAP has arrived. Our second strategy will be to bound the amount of time that $\H$'s fast machine spends procrastinating. 

We will need several key lemmas to combine these two ideas and bound $\H_f$. We start with proving that misassigned tasks that $\H$ runs on the fast machine, cannot arrive too late. 

\begin{lemma}
    All tasks $\tau_i \in \F$ arrive (strictly) before $\rcopt$.
    \label{lem:nolatearrive}
\end{lemma}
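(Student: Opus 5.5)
The plan is a proof by contradiction that uses only the slow-check of $\H$ and the definition of $R(\cdot)$. Suppose some $\tau_i \in \F$ arrives at time $t_i \geq \rcopt$. The first step is to pin down what $\rcopt$ gives us. The quantity $\C^t$ is a nondecreasing step function that changes only at arrival times; it is monotone by the argument in the proof of \cref{lem:noslowswap}, and right-continuous because $\OPT(t)$ includes tasks arriving exactly at $t$. So the earliest time at which $\H$ knows $\copt \geq \copt/\varphi$ is an arrival time $R$ with $\C^{R} \geq \copt/\varphi$. Monotonicity then gives $\C^{t_i} \geq \C^{\rcopt} \geq \copt/\varphi$.

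The second step uses that $\tau_i \in \F$, so $\OPT(\infty)$ runs $\tau_i$ on a slow machine. Hence its completion time is at most $\OPT$'s completion time, $s_i + t_i \leq \copt$; this is also item 1 of \cref{lem:setprops}. Combining with the first step,
\begin{equation*}
s_i + t_i \leq \copt = \varphi \cdot \frac{\copt}{\varphi} \leq \varphi\, \C^{t_i}.
\end{equation*}
This is exactly the slow-check condition $s_i + t \leq \varphi \C^t$ evaluated at $t = t_i$.

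The third step concludes. At time $t_i$, $\tau_i$ has just entered $\S$, and its slow-check succeeds, so $\H$ starts $\tau_i$ on a slow machine immediately. Since $\H$ is Eventually-committing, this assignment is irrevocable, contradicting $\tau_i \in \F$, that is, $\H$ running $\tau_i$ on the fast machine. Therefore every $\tau_i \in \F$ satisfies $t_i < \rcopt$.

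The only delicate point, and the expected obstacle, is the order in which checks fire at the instant $t_i$. One must argue that $\tau_i$ cannot be sent to the fast machine at that same instant, before the slow-check is applied. I would handle this by reading the algorithm's description as giving priority to check 1: the slow-check is evaluated first and removes the task from $\S$, and the fast-check is only the fallback when the slow-check fails, as the paper's own high-level summary says. I would also note that at any later time $t > t_i$ the slow-check has already acted, so the task is no longer in $\S$. A secondary subtlety is the endpoint $t_i = \rcopt$ itself, which is why the claim is strict. It is covered by the right-continuity observation in the first step, which ensures that $\C^{t_i} \geq \copt/\varphi$ already holds at $t_i = R$.
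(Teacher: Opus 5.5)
Your proof is correct and is essentially the paper's argument in contrapositive form. The paper chains $\varphi\C^{t_i} < s_i + t_i \le \copt \le \varphi\C^{\rcopt}$ and concludes $t_i < \rcopt$ from monotonicity of $\C^t$, which uses exactly your three ingredients; your remarks on right-continuity of $\C^t$ and on the slow-check taking priority at the instant $t_i$ make explicit two points the paper takes for granted.
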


\begin{proof}
    All tasks $\tau_i \in \F$ are run on the fast machine by $\H$, but ultimately on a slow machine by $\OPT$ once the entire TAP has arrived. This implies 

    \begin{equation}
        \varphi \C^{t_i} < s_i + t_i \leq \copt \leq \varphi \C^{\rcopt}.
    \end{equation}

    The first inequality comes from the fact that $\H$ doesn't immediately assign $\tau_i$ to a slow machine. The second inequality comes from the fact that the runtime of $\tau_i$ cannot exceed the total runtime of the entire TAP since $\OPT$ ultimately runs $\tau_i$ on a slow machine. The last inequality comes from the fact that $\C^{R(\copt/\varphi)} \geq \copt/\varphi$ by definition of $R$. Thus $t_i < \rcopt)$. 
\end{proof}

For further analysis we need to define a few key sets of tasks. Let $\tau_l$ be the last task to arrive strictly before $\rcopt$:

\begin{itemize}
    \item $\alpha = \S_{t_l} \cap \A$,
    \item $\beta = \S_{t_l} \cap \Fbig$,
\end{itemize}

Furthermore, for $t \geq \rcopt$ we define
\begin{equation}
 \beta^{t} =  \S_t \cap \Fbig.
\end{equation}

Importantly, we note that for $t_2 \geq t_1 \geq \rcopt$, 

\begin{equation}
    \beta^{t_2} \subseteq \beta^{t_1} \subseteq \beta,
\end{equation}
because as time progresses for $t>\rcopt$, no additional tasks from $\F$ arrive (by \cref{lem:nolatearrive}) but some may be removed from standby. 

For any set $X$ we define $\Sigma_X = \sum_{i \in X}f_i$, the total load $X$ can add to the fast machine if all tasks in $X$ are run successively without pause. Using this notation, we see that $\Sigma_{\beta^{t_2}} \leq \Sigma_{\beta^{t_1}} \leq \Sigma_{\beta}$. 

In our next lemma, we show that $\Sigma_\alpha$ and $\Sigma_\beta$ cannot be too large.

\begin{lemma}
    $\Sigma_{\alpha} + \Sigma_{\beta} < \frac{\copt}{\varphi}$.
    \label{lem:boundalphabeta}
\end{lemma}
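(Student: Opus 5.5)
The plan is to show that every task in $\alpha \cup \beta$ sits on the fast machine of the single offline schedule $\OPT(t_l)$. Then $\Sigma_\alpha + \Sigma_\beta$ is at most the load of that fast machine, and hence at most $\C^{t_l}$. Finally, $\C^{t_l}$ is strictly below $\copt/\varphi$ because $t_l$ comes before $\rcopt$.

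\textbf{Step 1: $\C^{t_l} < \copt/\varphi$.} The offline value $\C^t$ is nondecreasing in $t$ and changes only at arrival times. Since $\tau_l$ is the last arrival strictly before $\rcopt$, $\C^t = \C^{t_l}$ on $[t_l, \rcopt)$. If we had $\C^{t_l} \geq \copt/\varphi$, the online algorithm would already know at time $t_l < \rcopt$ that $\OPT$ needs at least $\copt/\varphi$. This contradicts the definition of $R(\copt/\varphi)$ as the earliest such time. Hence $\C^{t_l} < \copt/\varphi$.

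\textbf{Step 2: $\OPT(t_l)$ runs every task of $\alpha \cup \beta$ on its fast machine.} Every task in $\S_{t_l}$ has arrived by $t_l$, so it is part of the instance solved by $\OPT(t_l)$.
\begin{itemize}
\item For $\tau_a \in \alpha \subseteq \A$: $\OPT(\infty)$ puts $\tau_a$ on the fast machine. By \cref{lem:noslowswap}, tasks never swap from slow to fast, so $\OPT(t)$ keeps $\tau_a$ on the fast machine for every $t \geq t_a$, in particular at $t_l$ (equivalently, one can cite the argument of \cref{lem:AFstartfast}).
\item For $\tau_b \in \beta \subseteq \Fbig$: \cref{lem:AFstartfast} only gives fast placement at time $t_b$, and a fast-to-slow swap before $t_l$ is a priori possible. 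Suppose $\OPT(t_l)$ put $\tau_b$ on a slow machine. Then $s_b + t_b \leq \C^{t_l} < \copt/\varphi$ by Step 1. This contradicts $s_b + t_b > \copt/\varphi$ from the definition of $\Fbig$ (item 1 of \cref{lem:setprops}).
\end{itemize}

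\textbf{Step 3: conclude.} The fast machine of $\OPT(t_l)$ processes the tasks of $\alpha \cup \beta$ one at a time. Its runtime is therefore at least the sum of their fast runtimes, and the sets $\alpha$ and $\beta$ are disjoint. Combining with Step 1,
\[
\Sigma_\alpha + \Sigma_\beta \;\leq\; \tC^{t_l} \;\leq\; \C^{t_l} \;<\; \frac{\copt}{\varphi}.
\]

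\textbf{Main obstacle.} The only delicate points are the strictness in Step 1 and the possible fast-to-slow swap for $\beta$ tasks in Step 2. Both are handled by the fact that $\C^t$ is piecewise constant between arrivals and by the threshold defining $\Fbig$.
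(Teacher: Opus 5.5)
Your proof is correct and is essentially the paper's argument: every task of $\alpha\cup\beta$ must lie on $\OPT(t_l)$'s fast machine (otherwise $\C^{t_l}\ge s_i+t_i>\copt/\varphi$, contradicting $t_l<\rcopt$), and then $\Sigma_\alpha+\Sigma_\beta\le\tC^{t_l}\le\C^{t_l}<\copt/\varphi$. The differences are cosmetic: for $\alpha$ you invoke \cref{lem:noslowswap}, whereas the paper uses the largeness bound of \cref{lem:setprops} for both sets; and your non-strict first inequality is actually the right one, since the strictness comes only from $\C^{t_l}<\copt/\varphi$.
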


\begin{proof}
    Observe that any task $\tau_i \in \alpha \cup \beta$ must be assigned to the fast machine by $\OPT(t_l)$  or else $\C^{t_l}$ would exceed $\copt/\varphi$, by Lemma 3, which is not possible since $t_l < \rcopt$. Thus we have 

    \begin{equation}
     \Sigma_{\alpha} + \Sigma_{\beta} < \tC^{t_l} \leq \C^{t_l} < \frac{\copt}{\varphi}.   
    \end{equation}
\end{proof}

From this we define

\begin{equation}
    0 < \varepsilon  = \frac{\copt}{\varphi} - (\Sigma_{\alpha} + \Sigma_{\beta}).
    \label{eq:boundstandby}
\end{equation}

Our next lemma bounds how long tasks from $\A$ can stay ineligible. 

\begin{lemma}
    Let $\tau_a$ be a task in $\A$. If either $\tau_a \in \alpha$ or both $t_a \geq \rcopt$ and $\beta^{t_a} \neq \emptyset$, then $\tau_a$ is eligible, for $\H$'s fast machine, for all $t \geq \max(t_a, \Sigma_\alpha, \varepsilon + \Sigma_\beta - \Sigma_{\beta^{t_a}})$.
    \label{lem:early_eligible}
\end{lemma}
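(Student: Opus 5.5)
The plan is to unpack eligibility and split into the two hypotheses of the lemma. By condition (c) of the fast-check, $\tau_a$ is eligible at every time $t \geq f_a/\varphi$, so it suffices to show
\[
\frac{f_a}{\varphi} \leq \max\left(t_a,\ \Sigma_\alpha,\ \varepsilon + \Sigma_\beta - \Sigma_{\beta^{t_a}}\right).
\]
Eligibility only depends on absolute time, so once this inequality holds, $\tau_a$ remains eligible at all later times.

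\textbf{Case $\tau_a \in \alpha$.} This case is immediate. Since $f_a$ is one of the summands of $\Sigma_\alpha$, we have $f_a/\varphi < f_a \leq \Sigma_\alpha$. Hence $\tau_a$ is eligible for every $t \geq \Sigma_\alpha$, and in particular for every $t$ at least the stated maximum.

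\textbf{Case $t_a \geq \rcopt$ and $\beta^{t_a} \neq \emptyset$.} This is the main work. If $f_a \leq \varphi t_a$ we are done by the $t_a$ term, so I would assume $f_a > \varphi t_a$ and aim to prove $f_a \leq \varphi(\varepsilon + \Sigma_\beta - \Sigma_{\beta^{t_a}})$. By \eqref{eq:boundstandby},
\[
\varphi\left(\varepsilon + \Sigma_\beta - \Sigma_{\beta^{t_a}}\right) = \copt - \varphi\left(\Sigma_\alpha + \Sigma_{\beta^{t_a}}\right),
\]
so the target is equivalent to $f_a + \varphi(\Sigma_\alpha + \Sigma_{\beta^{t_a}}) \leq \copt$.

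I would obtain this from a load bound on an optimal fast machine, in three steps.
\begin{enumerate}
\item By \cref{lem:AFstartfast}, $\tau_a$ is on $\OPT(t_a)$'s fast machine. The tasks of $\alpha \subseteq \A$ are on $\OPT(\infty)$'s fast machine, hence by \cref{lem:noslowswap} also on $\OPT(t_a)$'s fast machine.
\item For each $\tau_b \in \beta^{t_a}$, the task is still on standby at $t_a$, so its slow-check fails: $s_b + t_a > \varphi \C^{t_a} \geq \copt$, using $\C^{t_a} \geq \copt/\varphi$ because $t_a \geq \rcopt$. I would combine this with $f_b \leq s_b/\varphi$ from \eqref{eq:slowcheckfail}, and try to argue that these tasks still sit on $\OPT(t_a)$'s fast machine, or at least contribute to its load.
\item Adding these loads should give $\tC^{t_a} \geq f_a + \Sigma_\alpha + \Sigma_{\beta^{t_a}}$, with $\tC^{t_a} \leq \C^{t_a} \leq \copt$.
\end{enumerate}

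The main obstacle is the missing factor of $\varphi$ on $\Sigma_\alpha + \Sigma_{\beta^{t_a}}$. The plain load bound above gives $f_a + \Sigma_\alpha + \Sigma_{\beta^{t_a}} \leq \copt$, which is weaker than what is needed. To recover the factor, I would use the assumption $f_a > \varphi t_a$ together with $s_a + t_a > \copt$ from \cref{lem:setprops}. The idea is that $\OPT$'s fast machine cannot start $\tau_a$ before $t_a$ and must also process the remaining load, which may yield the extra $(\varphi - 1)$ share via $\varphi = 1 + 1/\varphi$. The nonemptiness of $\beta^{t_a}$ should also be used here: a pending big fake task certifies that $\H$'s fast machine has not yet drained $\beta$, and this bounds $t_a$ in terms of the processed amount $\Sigma_\beta - \Sigma_{\beta^{t_a}}$.

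I expect this balancing of $t_a$ against the processed part of $\beta$ to be the delicate step. If the direct load argument fails, I would fall back to a case split on whether $f_a \leq \copt/\varphi$.
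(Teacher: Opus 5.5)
Your reduction is correct. The case $\tau_a\in\alpha$ is handled as in the paper, and in the other case the target $f_a+\varphi(\Sigma_\alpha+\Sigma_{\beta^{t_a}})\le\copt$ is indeed equivalent to $f_a/\varphi\le\varepsilon+\Sigma_\beta-\Sigma_{\beta^{t_a}}$. But you never prove that target: you flag the missing factor $\varphi$ and only speculate about how to recover it. The missing idea is that you use the failed slow-check of $\tau_b\in\beta^{t_a}$ at time $t_a$ in the wrong direction. Writing $s_b+t_a>\varphi\C^{t_a}\ge\copt$ discards exactly the information you need. Read it instead as an upper bound on $\OPT(t_a)$'s fast load. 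Since $\tau_b\in\Fbig$ gives $s_b\le s_b+t_b\le\copt$ (\cref{lem:setprops}), and you are assuming $t_a<f_a/\varphi$,
\[
\varphi\left(f_a+\Sigma_\alpha+\Sigma_{\beta^{t_a}}\right)\le\varphi\tC^{t_a}\le\varphi\C^{t_a}<s_b+t_a<\copt+\frac{f_a}{\varphi}.
\]
Because $\varphi-1/\varphi=1$, this rearranges to $f_a+\varphi(\Sigma_\alpha+\Sigma_{\beta^{t_a}})<\copt$, which is your target. So the factor $\varphi$ is simply the slow-check threshold; no argument about when $\OPT$'s fast machine can start $\tau_a$ is needed. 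The route you sketch does not supply it. The fact $s_a+t_a>\copt$ says nothing about $f_a$. Relating $t_a$ to the already-started part $\Sigma_\beta-\Sigma_{\beta^{t_a}}$ bounds that part above by roughly $t_a$. Combined with $t_a<f_a/\varphi$, that pushes $f_a$ up, which is the wrong direction.

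Your step 2 also leaves unproved (``try to argue'') that the tasks of $\beta^{t_a}$ lie on $\OPT(t_a)$'s fast machine. The load bound $\tC^{t_a}\ge f_a+\Sigma_\alpha+\Sigma_{\beta^{t_a}}$ needs this, and it is not automatic once $t_a\ge\rcopt$; \cref{lem:boundalphabeta} only gives it at time $t_l$. It follows from the same failed slow-check together with $t_a<f_a/\varphi$:
\begin{enumerate}
\item Since $\tau_a$ is on $\OPT(t_a)$'s fast machine, $\tC^{t_a}\ge t_a+f_a$. The failed slow-check then gives $s_b+t_a>\varphi(t_a+f_a)$, hence $f_a<s_b/\varphi$ and so $t_a<f_a/\varphi<(s_b+t_b)/\varphi^2$.
\item If $\tau_b$ were on a slow machine of $\OPT(t_a)$, then $\C^{t_a}\ge s_b+t_b$, and $\tau_b$'s slow-check ratio at $t_a$ would be at most $1+1/\varphi^2<\varphi$.
\item Then $\H$ would have sent $\tau_b$ to a slow machine, contradicting $\tau_b\in\Fbig$.
\end{enumerate}
With these two ingredients the case closes; this is the paper's argument.
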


\begin{proof}
    Consider a task $\tau_a \in \A$ arriving at $t_a \geq \rcopt$ with $\beta^{t_a} \neq \emptyset$. If the task is eligible upon arrival then we're done. Otherwise we have that $t_a < \frac{f_a}{\varphi} $.

    From \cref{lem:AFstartfast}, we know that $\tau_a$ is assigned to the fast machine by $\OPT(t_a)$. Furthermore we know that for any $\tau_i \in \beta^{t_a}$, $\H$ does not start $\tau_i$ on a slow machine at $t_a$ because all tasks in $\beta^{t_a}$ are ultimately run on $\H$'s fast machine. Thus similarly to \eqref{eq:slowcheckfail} we have

    \begin{equation}
        \varphi < \frac{\overbrace{s_i + t_a}^{\tau_i \text{  on $\H$'s slow machine at $t_a$}}}{\C^{t_a}} \leq \frac{s_i + t_a}{\tC^{t_a}} \leq \frac{s_i + t_a}{f_a + t_a}
        \implies \frac{s_i}{\varphi} > f_a.
        \label{eq:weak}
    \end{equation}

    This tells us that $t_a < \frac{f_a}{\varphi} < \frac{s_i}{\varphi^2} \leq \frac{s_i + t_i}{\varphi^2} $. From this we can see that if any $\tau_i \in \beta^{t_a}$ were assigned to a slow machine by $\OPT(t_a)$ then $\H$ would assign $\tau_i$ to a slow machine because then we would have the following slow-check for $\tau_i$:

    \begin{equation}
        \frac{s_i +t_a}{\C^{t_a}} \leq \frac{(s_i + t_i) + \frac{s_i + t_i}{\varphi^2}}{\underbrace{s_i + t_i}_{\tau_i \text{ on $\OPT(t_a)$'s slow machine} }} = 1 + \frac{1}{\varphi^2} \leq \varphi.
    \end{equation}

This gives us 

\begin{equation}
    \Sigma_\alpha + \Sigma_{\beta^{t_a}} + f_a \leq \tC^{t_a}    
\end{equation}
because all of $\alpha \cup \beta^{t_a} \cup \{\tau_a\}$ are on $\OPT(t_a)$'s fast machine. Furthermore, recall that for any $\tau_i \in \beta^{t_a}$, we have $s_i \leq \copt$ from \cref{lem:setprops} because $\beta^{t_a} \subseteq \Fbig$. Taken together we can get a stronger version of \eqref{eq:weak}. That is, we now have that for any $\tau_i \in \beta^{t_a}$

\begin{equation}
    \varphi < \frac{s_i + t_a}{\tC^{t_a}}   \leq \frac{\copt + \frac{f_a}{\varphi}}{\Sigma_\alpha + \Sigma_{\beta^{t_a}} + f_a} 
    =\frac{\copt + \frac{f_a}{\varphi}}{(\frac{\copt}{\varphi} - \varepsilon) + (\Sigma_{\beta^{t_a}} - \Sigma_\beta) + f_a}.
\end{equation}

Rearranging the inequality from the first and last expression gives us $\varphi(\varepsilon + \Sigma_\beta - \Sigma_{\beta^{t_a}})> (\varphi - \frac{1}{\varphi})f_a = f_a$ and thus $\tau_a$ will be eligible for $t \geq \varepsilon + \Sigma_\beta - \Sigma_{\beta^{t_a}}$.

Now consider a task $\tau_a \in \alpha$. It is clear that it will be eligible for all $t \geq \frac{\Sigma_\alpha}{\varphi}$ since $f_a \leq \Sigma_\alpha$. Therefore $\tau_a$ will certainly be eligible for $t \geq \Sigma_\alpha$. 

Combining these two results gives us the lemma.
\end{proof}

Next we have a useful lemma that simplifies our analysis. For this we remind the reader of the notion of $t^-$, defined in \eqref{eq:tminus}, as the timestep ``just before $t$".

\begin{lemma}
    Assume there exist some time before $\H_f$ when $\H$'s fast machine is free, and let $t$ be the last of such times. If all of the following conditions hold 
    
    \begin{enumerate}
        \item $t \geq \rcopt$,
        \item $\Fsmall \cap S_{t^-} = \emptyset$,
        \item $\A \cap S_{t^-} \neq \emptyset$,
        \item All tasks in $\A \cap S_{t^-}$ are eligible (at $t^-$),
        
    \end{enumerate}
    then $\H_f \leq \varphi\copt$.
    \label{lem:weirdgood}
\end{lemma}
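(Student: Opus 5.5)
The plan is to write $\H_f$ as the last free time $t$ plus the total work started on the fast machine after $t$. Then $t$ is bounded by an ineligibility argument, and the post-$t$ work is charged to $\OPT(\infty)$'s fast machine and to the budget of \cref{lem:boundalphabeta}.

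\textbf{Decomposing the final busy period.} By the choice of $t$, the fast machine runs without pause from $t$ until $\H_f$. So $\H_f = t + \Sigma_X$, where $X$ is the set of tasks started on the fast machine at or after $t$. Every $\tau\in X$ either lies in $\S_{t^-}$ or arrives at or after $t$. Since $t\ge \rcopt$ (condition 1), \cref{lem:nolatearrive} shows that no task of $\F$ arrives at or after $t$. Condition 2 excludes $\Fsmall$ from $\S_{t^-}$. Hence $X\subseteq \A\cup\beta^{t^-}$, and
\[
\H_f \le t + \Sigma_{X\cap\A} + \Sigma_{\beta^{t^-}} .
\]

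\textbf{Bounding $t$ via eligibility.} Condition 3 provides some $\tau_a\in\A\cap\S_{t^-}$, and condition 4 says every such task is eligible at $t^-$. At $t^-$ the machine is either just finishing a task, which is what makes $t$ a closed free time, or it is idle. In the idle case no task was started at $t^-$, so by the fast-check the largest task $\tau_x\in\S_{t^-}$ was ineligible. By \cref{lem:setprops}, every task of $\A$ is larger than every task of $\F$. Moreover all tasks of $\A$ in standby are eligible. So $\tau_x$ must be a task that $\OPT(\infty)$ puts on the fast machine but $\H$ eventually puts on a slow machine, with $s_x+t_x>\copt$. Ineligibility gives $t<f_x/\varphi$ (up to the $\delta$ in $t^-$). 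Since $\tau_x$ is on $\OPT$'s fast machine together with all of $\A$, we have $f_x+\Sigma_{X\cap\A}\le \copt$. In the other case, where the machine was busy until exactly $t$, I would instead pass to the preceding idle moment: the waiting eligible $\tau_a$ forces the same dichotomy at the start of that busy block, and the extra time can be absorbed into $\Sigma_X$.

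\textbf{Combining, and the main obstacle.} The target is
\[
\H_f < \tfrac{f_x}{\varphi} + \Sigma_{X\cap\A} + \Sigma_{\beta^{t^-}} \le \varphi\copt .
\]
The crude bounds $f_x+\Sigma_{X\cap\A}\le\copt$ and $\Sigma_{\beta^{t^-}}\le\Sigma_\beta<\copt/\varphi$ are not enough on their own, because they leave a surplus of $f_x(1/\varphi-1)$ in the wrong direction only when $f_x$ is small. What is missing is a joint bound showing that the tasks of $\beta^{t^-}$ still waiting also compete for $\OPT$'s fast machine. I would redo the argument of \cref{lem:early_eligible} at time $t_x$ (or at $t$). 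Each $\tau_i\in\beta^{t^-}$ fails its slow-check, and $s_i\le\copt$. Together these should show that $\OPT(t)$ keeps $\beta^{t^-}\cup\{\tau_x\}\cup(\A\cap\S_{t^-})$ on its fast machine, or else bound $\Sigma_{\beta^{t^-}}$ in terms of $\varepsilon$ and $f_x$. That would give a relation of the form $\varphi(\copt-\text{fast load})>\dots$, which closes the gap using $\varphi=1+1/\varphi$.

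I expect this joint charging of $\beta^{t^-}$ against $\tau_x$ and the $\A$-tasks to be the crux. The remaining difficulty is the bookkeeping needed to reduce the ``busy until exactly $t$'' case to the idle case.
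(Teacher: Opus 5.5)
Everything up to your last step is exactly the paper's proof. You write $\H_f$ as $t$ plus the work started after $t$. That work lies in $\A$ plus $\Fbig$-tasks still waiting, the latter totalling at most $\Sigma_\beta < \copt/\varphi$. The idle machine at $t^-$ forces an ineligible blocking task $\tau_x$ strictly larger than every waiting $\A$-task. So $\tau_x$ is on $\OPT$'s fast machine but not in $\A\cup\F$, which gives $t \le f_x/\varphi$ and $f_x + \Sigma_{\A} \le \copt$.

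\textbf{The gap.} You then assert that these bounds do not close, and you replace the final step by a ``joint charging'' of $\beta^{t^-}$ against $\tau_x$, via a re-run of \cref{lem:early_eligible}, which you never carry out. As written, the proposal therefore never concludes. The claimed obstacle is a sign error. Combining the three bounds you already have gives
\[
\H_f < \frac{f_x}{\varphi} + (\copt - f_x) + \frac{\copt}{\varphi} = \varphi\copt - \left(1 - \frac{1}{\varphi}\right) f_x = \varphi\copt - \frac{f_x}{\varphi^2} < \varphi\copt .
\]
The ``surplus'' $f_x(1/\varphi - 1)$ is negative for every $f_x > 0$, so it helps you whatever the size of $f_x$. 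This is precisely how the paper finishes. The idle time forced by $\tau_x$'s ineligibility is at most $f_x/\varphi$. It is paid for by the $f_x$ units that $\tau_x$ occupies on $\OPT$'s fast machine but never on $\H$'s. No interaction between $\beta^{t^-}$ and $\tau_x$ is needed, so drop that part of the plan.

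\textbf{Your second case is unnecessary.} The case where the machine is busy until exactly $t$ does not arise under the paper's conventions. There ``free'' means idle; the fast-check lists ``free'' and ``just finished a task'' as distinct situations. So the last free time is the right end of an idle stretch, and the machine is idle at $t^-$, which the paper uses directly. Your sketched reduction for that case would not work as stated anyway. Conditions 3 and 4 concern $\S_{t^-}$ and say nothing about the standby set at the earlier idle moment.

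\textbf{The $\delta$ caveat.} Your ``up to $\delta$'' caveat can also be removed. The moment the largest standby task becomes eligible is itself an event, and no event lies in $(t^-,t)$. So ineligibility at $t^-$ already gives $t \le f_x/\varphi$.
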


\begin{proof}
    Suppose that the four conditions of the lemma hold.

    Observe that conditions 1 and 2 imply that the remaining contribution to $\H_f$ from $\F$-tasks after time $t$ is bounded by $\Sigma_\beta$ because we can assume by \cref{lem:nolatearrive} that no additional tasks from $\F$ arrive after $\rcopt$. 

    Conditions 3 and 4 imply that there is at least one large task on standby preventing the tasks from $\A$ from running on $\H$'s fast machine at $t^-$. This is because $\H$'s fast machine is free at $t^-$ if $t$ is the last time it is free. Thus there exists a non-empty subset of ineligible tasks $V \subset \S_{t^-}$ such that $V \cap \A = \emptyset$ and that for every task $\tau_v \in V$ and $\tau_a \in \A \cap S_{t^-}$ we have

    \begin{equation}
        t_v + s_v > t_a + s_a > \copt,
    \end{equation}
    where the last inequality comes from \cref{lem:setprops}. This implies that all tasks in $V$ are assigned to the fast machine by $\copt$, otherwise $\OPT$'s final completion time would be larger than $\copt$. Let $\tau_w \in V$ be such that 
    
    \begin{equation}
     f_w = \max_{\tau_i \in V}(f_i).   
     \label{eq:tau_define}
    \end{equation}
     
    In other words, $\tau_w$ is the task with the longest fast runtime in $V$. Since $\tau_w \notin \A$, it does not contribute to $\H_f$. We can now characterize the contribution to $\H_f$ from $\A$ as follows:

    \begin{equation}
    \copt \geq \tC^{\infty} \geq f_w +\sum_{\tau_i \in \A}f_i \implies \copt - f_w \geq \sum_{\tau_i \in \A}f_i.
    \end{equation}

    Since $\tau_w$ is ineligible, then $t < \frac{f_w}{\varphi}$. Since $t$ is the last time $\H$'s fast machine is free, we have 

    \begin{equation}
        \H_f \leq t + \overbrace{\Sigma_\beta}^{\text{work from $\F$}} + \underbrace{\sum_{\tau_i \in \A}f_i}_{\text{work from $\A$}} < \frac{f_w}{\varphi} + \frac{\copt}{\varphi} + (\copt - f_w) \leq \left(1 + \frac{1}{\varphi}\right)\copt = \varphi \copt,
    \end{equation}
    where in the second inequality we use the fact that $\Sigma_\beta < \frac{\copt}{\varphi}$ from \cref{lem:boundalphabeta}.
\end{proof}

Next we prove a lemma that narrows down our analysis.

\begin{lemma}
    If $\H$ starts a task $\tau \in \F_{\text{small}}$ at any time $t \geq R(\copt/\varphi)$ then $\H_f \leq \varphi\copt$.
    \label{lem:nolatesmall}
\end{lemma}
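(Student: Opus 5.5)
The plan is to exploit what it means for $\H$ to pick $\tau_m \in \Fsmall$ at time $t \geq \rcopt$. By fast-check condition (b), $\tau_m$ has the largest $s_i + t_i$ value in $\S_t$. So every task on standby at $t$ satisfies $s_i + t_i \leq s_m + t_m \leq \copt/\varphi$ (item 1 of \cref{lem:setprops}). Comparing with \cref{lem:setprops}, $\S_t$ contains no task of $\A$ and no task of $\Fbig$. A task on standby with $s_i + t_i \le \copt/\varphi < \copt$ is put on a slow machine by $\OPT(\infty)$. Hence every task of $\S_t$ that $\H$ later runs on its fast machine lies in $\Fsmall$. By \cref{lem:nolatearrive}, no new $\F$ task arrives after $\rcopt \le t$. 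Therefore all fast-machine work of $\H$ after time $t$ consists of (i) $\tau_m$ together with the $\Fsmall$ tasks in $\S_t$, and (ii) tasks of $\A$ arriving after $t$.

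Next I would extract the lower bound on $t$ that comes from the failed slow-check. Since $\tau_m$ is still on standby at $t$, we have $s_m + t > \varphi\C^t \geq \varphi \C^{\rcopt} \geq \copt$, so $t > \copt - s_m \geq \copt - \copt/\varphi = \copt/\varphi^2$. The same argument applies to every $\Fsmall$ task $\tau_i \in \S_t$ that $\H$ eventually runs fast, giving $s_i > \copt - t$. Combined with $f_i \le s_i/\varphi$ (from \eqref{eq:slowcheckfail}), this constrains the leftover $\Fsmall$ load relative to $t$. For the $\Fsmall$ load itself I would reuse the argument of \cref{lem:boundalphabeta}. Each such task arrived before $\rcopt$ and was on $\OPT(t_l)$'s fast machine by \cref{lem:AFstartfast} and \cref{lem:noslowswap}. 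So the total $f$-load of the $\Fsmall$ tasks in $\S_t$ is less than $\tC^{t_l} < \copt/\varphi$. For the $\A$ tasks arriving after $t$, each one has $t_a + f_a \leq \copt$ on $\OPT$'s fast machine. Their total load is at most $\tC^\infty - \sum_{\tau_a\in\A,\,t_a\le t} f_a \le \copt - t$, because $\OPT$'s fast machine must process them after their arrival $\geq t$.

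I would then finish with a case split on the last time $t^\ast$ at which $\H$'s fast machine is free. If $t^\ast \le t$, then $\H_f \le t + (\text{$\Fsmall$ load in }\S_t) + (\copt - t)$, and the bound $<\copt/\varphi$ on the first load gives $\H_f < \copt + \copt/\varphi = \varphi\copt$. If $t^\ast > t$, the only possible cause of idling with a nonempty standby set is an ineligible largest task. Only $\A$ tasks (or tasks that $\OPT$ puts fast but $\H$ puts slow) can block, since the $\Fsmall$ tasks are smaller than any $\A$ task. I would bound $H_f \le t^\ast + (\text{remaining load})$ as in the proof of \cref{lem:weirdgood}. There, the blocking ineligible task $\tau_w$ gives $t^\ast < f_w/\varphi$, and $f_w$ is charged against $\OPT$'s fast machine, so that $\frac{f_w}{\varphi} + \frac{\copt}{\varphi} + (\copt - f_w) \le \varphi\copt$.

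I expect the second case to be the main obstacle. Specifically, the remaining $\Fsmall$ work after $t^\ast$ must be charged carefully so that it does not add a further $\copt/\varphi$ on top of the $\A$ work. I would first try to show directly that, after $t^\ast$, the leftover $\Fsmall$ tasks cannot still be waiting. Any such task has $s_i + t_i \le \copt/\varphi$, and $t^\ast$ is large relative to them, so the slow-check would already have released it; more precisely, one uses eligibility and largeness to argue that the blocking $\A$ task runs first. If that fails, I would fall back to combining the explicit inequalities $s_i > \copt - t$ and $f_i \le s_i/\varphi$ with $t > \copt/\varphi^2$ to trade the $\Fsmall$ load against the idle time.
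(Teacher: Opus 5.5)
Your setup matches the paper's: $\S_t$ contains no task of $\A\cup\Fbig$, no $\F$ task arrives after $\rcopt$, and the $\A$ work arriving after $t$ is at most $\copt-t$. There is, however, a genuine gap, and it sits where you expected.

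\textbf{Where it fails.} Because you work at an arbitrary start time $t\ge\rcopt$ of an $\Fsmall$ task, you must control the $\Fsmall$ load still waiting in $\S_t$. Your justification for $\Sigma_{\Fsmall\cap\S_t}<\copt/\varphi$ does not hold up. \cref{lem:AFstartfast} and \cref{lem:noslowswap} only say that an $\F$ task sits on $\OPT$'s fast machine at its own arrival and never moves from slow to fast. $\F$ tasks do move from fast to slow. Any $\Fsmall$ task with $s_i+t_i\le\C^{t_l}$ is on a slow machine of $\OPT(t_l)$ by the very definition of $\OPT$, and the cited lemmas do not exclude such a task still waiting at $t_l$. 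The argument of \cref{lem:boundalphabeta} works only because tasks of $\alpha\cup\beta$ satisfy $s_i+t_i>\copt/\varphi>\C^{t_l}$, which is exactly what fails for $\Fsmall$. So your first case rests on an unproved bound. Your second case is a list of things to try rather than an argument. To run the \cref{lem:weirdgood} argument at the last idle time $t^\ast$, you need that no $\Fsmall$ task is waiting just before $t^\ast$ and that every waiting $\A$ task is eligible. You establish neither.

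\textbf{The missing idea.} Take the \emph{last} time $\hat t$ at which $\H$ starts an $\Fsmall$ task, say $\tau_m$, on its fast machine. Then $\hat t\ge\rcopt$. Also $\hat t\le\copt$, because an $\Fsmall$ task waiting at time $\copt$ passes the slow-check. Every other task in $\S_{\hat t}$ has $s_i+t_i\le s_m+t_m\le\copt/\varphi$. So if $\H$ later ran it fast, it would be an $\Fsmall$ task started after $\hat t$, contradicting the choice of $\hat t$. Together with \cref{lem:nolatearrive}, the only $\F$ work after $\hat t$ is $f_m\le\copt/\varphi^2$, so no leftover-load bound is needed at all. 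Now let $T=\max(\hat t+f_m,\copt/\varphi)$.
\begin{itemize}
\item If the fast machine runs continuously from $T$, then $\H_f\le T+(\copt-\hat t)=\max\bigl(\copt+f_m,\ \varphi\copt-\hat t\bigr)\le\varphi\copt$.
\item Otherwise its last free time $t_1$ exceeds $T$. Every task is eligible then, since $f_i\le\copt$, and $\Fsmall\cap\S_{t_1^-}=\emptyset$. If $\A\cap\S_{t_1^-}=\emptyset$, the machine only processes $\A$ tasks arriving at or after $t_1$, so $\H_f\le\copt$. Otherwise all hypotheses of \cref{lem:weirdgood} hold.
\end{itemize}
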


\begin{proof}

    First observe that if any task $\tau_m \in \Fsmall$ is waiting on standby at $t = \copt$, then $\H$ will be aware that $\OPT$ takes at least $\copt$ time and will run $\tau_m$ on a slow machine because we have

    \begin{equation}
        \underbrace{s_m + t}_{\tau_m \text{ on $\H$'s slow machine}} = s_m + \copt < s_m + t_m + \copt \leq \frac{\copt}{\varphi} + \copt \leq \varphi\copt,
    \end{equation}
    which is impossible since tasks from $\Fsmall$ are all run on $\H$'s fast machine. Since no tasks can arrive after $\copt$, this implies that no task from $\Fsmall$ will start after $\copt$.

    Now let $t^* \geq \rcopt$ be the last time that $\H$ starts a task $\tau_m \in \Fsmall$. Since $\H$ prioritizes tasks with larger $s_i + t_i$, then by \cref{lem:setprops} there must be no tasks $\tau \in \Fbig \cup \A$ in $\S_{t^*}$. By \cref{lem:nolatearrive}, no tasks from $\F$ will arrive from this point forward, and thus all remaining work contributing to $\H_f$ is from $\A$. Further we note that from this point forward at most $\copt - t^*$ amount of work can arrive from $\A$ because $\OPT$ must complete these tasks on the fast machine; notice that because $t^* \leq \copt$ this amount of work is valid. 

    Since for $\tau_a \in \A$ we have $f_a < \copt$, we know that all tasks from $\A$ will be eligible by $\frac{\copt}{\varphi}$. Thus if $\H$ runs continuously from $\max(t^* + f_m, \frac{\copt}{\varphi})$ then we have 

    \begin{align}
        \H_f &\leq \max(t^* + f_m, \frac{\copt}{\varphi}) + (\copt - t^*) \\
        &= \max(\copt + f_m, (1 + \frac{1}{\varphi})\copt - t^*) \\
        &\leq \max((1 + \frac{1}{\varphi^2})\copt, (1 + \frac{1}{\varphi})\copt) \\
        &= \varphi\copt,
    \end{align}
    where in the last inequality we used the fact that $f_m \leq \frac{\copt}{\varphi^2}$ from \cref{lem:setprops}.

    Otherwise $\H$'s fast machine is free at some point after $\max(t^* + f_m, \frac{\copt}{\varphi})$. Let $t > \max(t^* + f_m, \frac{\copt}{\varphi})$ be the last time $\H$'s fast machine is free.
    
    Suppose we have $\A \cap S_{t^-} = \emptyset$. Since $\F \cap \S_t = \emptyset$, this implies that from $t$ onward $\H$'s fast machine only works on tasks from $\A$, with arrival time greater than or equal to $t$. Since $\H$'s fast machine does this work continuously it must finish processing the set $\A$ at least as quickly as $\OPT$'s fast machine. This gives us $\H_f \leq \copt$.
    
    Otherwise we have that $\A \cap S_{t^-} \neq \emptyset$. We note that since all tasks in $\A$ are eligible at $\max(t^* + f_m, \frac{\copt}{\varphi}) < t$, then they are all also eligible at $t^-$. Thus we have all the necessary conditions to invoke \cref{lem:weirdgood}. 
\end{proof}

\cref{lem:nolatesmall} basically tells us that we can assume that all tasks in $\Fsmall$ start before $\rcopt$. Another way to think about it is that no tasks from $\Fsmall$ are in $\S_t$ for any $t \geq \rcopt$.

The next lemma strengthens \cref{lem:weirdgood} and tells us that $\H$ performs well, unconditionally, if its fast machine is ever free at a late time.

\begin{lemma}
    If $\H$'s fast machine is free at any time $\rcopt \leq t < \H_f$ then $\H_f \leq \varphi\copt$.
    \label{lem:lategaps}
\end{lemma}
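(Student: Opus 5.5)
We take $t$ to be the \emph{last} time in $[\rcopt, \H_f)$ at which $\H$'s fast machine is free, and reduce to \cref{lem:weirdgood} or to a direct argument. By \cref{lem:nolatesmall}, we may assume no task of $\Fsmall$ starts at or after $\rcopt$. Hence $\Fsmall \cap \S_{t^-} = \emptyset$, since any such task would have to start after $t$. Conditions 1 and 2 of \cref{lem:weirdgood} therefore hold automatically. We split into cases according to conditions 3 and 4.

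\textbf{Case $\A \cap \S_{t^-} = \emptyset$.} After time $t$ the fast machine runs continuously. Its remaining work comes from tasks of $\beta^{t}$ still on standby and from $\A$-tasks arriving at or after $t$, because no $\F$-tasks arrive after $\rcopt$ by \cref{lem:nolatearrive}. $\OPT$'s fast machine must process these $\A$-tasks after their arrival, so their total work is at most $\copt - t$. The $\beta^t$ work is at most $\Sigma_\beta < \copt/\varphi$ by \cref{lem:boundalphabeta}. This gives
\[
\H_f \le t + \Sigma_\beta + (\copt - t) < \left(1+\tfrac{1}{\varphi}\right)\copt = \varphi\copt .
\]

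\textbf{Case $\A \cap \S_{t^-}$ nonempty and all its tasks eligible at $t^-$.} Here \cref{lem:weirdgood} applies directly.

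\textbf{Case some $\tau_a \in \A \cap \S_{t^-}$ is ineligible at $t^-$.} This is the main obstacle. The plan is to use \cref{lem:early_eligible} to show that the free time $t$ is early and then bound the subsequent work.

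If $\beta^{t_a} = \emptyset$ for the relevant $\A$-tasks, then after $t$ only $\A$-work (plus possibly leftover $\beta^t$, which is empty in this subcase) remains. We argue as in the first case, using $f_a \le \copt$ and eligibility time $f_a/\varphi$. This gives
\[
\H_f \le \max\!\left(t, \tfrac{f_a}{\varphi}\right) + \text{(remaining $\A$ work)}.
\]
We then charge the wait until $f_a/\varphi$ against the fact that $\tau_a$ itself occupies $f_a$ of $\OPT$'s fast machine, in the same way $f_w$ was handled in \cref{lem:weirdgood}. The resulting bound is
\[
\frac{f_a}{\varphi} + (\copt - f_a) + \cdots \le \varphi\copt .
\]

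If $\tau_a \in \alpha$, or $t_a \ge \rcopt$ with $\beta^{t_a} \neq \emptyset$, then \cref{lem:early_eligible} says $\tau_a$ is eligible once
\[
t \ge \max\!\left(t_a,\ \Sigma_\alpha,\ \varepsilon + \Sigma_\beta - \Sigma_{\beta^{t_a}}\right).
\]
Ineligibility at $t^-$ then forces $t \le \max\!\left(\Sigma_\alpha,\ \varepsilon + \Sigma_\beta - \Sigma_{\beta^{t_a}}\right)$. The remaining fast work after $t$ is at most $\Sigma_{\beta^{t}} + \Sigma_\alpha + (\copt - \rcopt)$-type $\A$-work. Summing these and using $\Sigma_\alpha + \Sigma_\beta + \varepsilon = \copt/\varphi$ should give
\[
\H_f \le \frac{\copt}{\varphi} + \copt = \varphi\copt .
\]
The delicate part is the bookkeeping: $\beta$-tasks already started before $t$ must cancel against the $\Sigma_\beta - \Sigma_{\beta^{t_a}}$ term. $\A$-work arriving late must be bounded by $\copt$ minus the $\OPT$ fast-machine time already consumed by $\alpha$. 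I expect this matching of terms, together with choosing the right $\tau_a$ (for instance the one with largest $f_a$), to be the hardest step.
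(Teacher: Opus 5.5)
Your proposal is correct and follows the paper's proof essentially step for step: the last free time $t$, \cref{lem:nolatesmall} to exclude $\Fsmall$, the same three cases, and \cref{lem:early_eligible} with the cancellation $\Sigma_{\beta^t}\le\Sigma_{\beta^{t_a}}$. Note that the paper first dismisses $t\ge\copt$ as vacuous, which your bound $\copt-t$ implicitly needs.

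The bookkeeping you flag as hardest closes in one line, with no charging and no special choice of $\tau_a$. Bound all $\A$-work, whether on standby or still to arrive, by $\Sigma_{\A}\le\copt$ rather than by $\Sigma_\alpha+(\copt-\rcopt)$; the latter does not close when $t\le\Sigma_\alpha$, since $\rcopt$ may be far below $\Sigma_\alpha$. This gives $\H_f\le t+\Sigma_{\beta^t}+\copt$. Then each of $t\le\copt/\varphi$ (when $\beta^{t_a}=\emptyset$), $t\le\Sigma_\alpha$, or $t\le\varepsilon+\Sigma_\beta-\Sigma_{\beta^{t_a}}$ immediately yields $\H_f\le\varphi\copt$, exactly as in the paper.
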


\begin{proof}

    If $t \geq \copt$ then all tasks have arrived and all tasks are eligible. To see the latter observe that for any task $\tau_i$, we have
    \begin{equation}
     f_i \leq \copt < \varphi \copt \leq \varphi t,   
    \end{equation}
    where the first inequality comes from the fact that $\OPT$ finishes processing $\tau_i$ on some machine prior to $\copt$. In this case if $\H$ is free, then there must not be any tasks on standby and thus $\H$'s fast machine has already completed and thus $t \geq H_f$ and the lemma is vacuously true. So we only focus on the case where $t < \copt$.

    Consider the last time $t \geq \rcopt$ such that $\H$'s fast machine is free. 
        
    Due to \cref{lem:nolatesmall}, we can assume that $\Fsmall \cap \S_t = \emptyset$. Thus we have that 
    \begin{equation}
        \Sigma_{(\Fsmall \cup \Fbig) \cap S_t} = \Sigma_{\Fbig \cap S_t} = \Sigma_{\beta^t}.    
        \label{eq:boundst}
    \end{equation}
    
    In other words, the total contribution of $\F$ tasks on standby, to $\H_f$, is bounded by $\Sigma_{\beta^t}$.

    \paragraph{Case 1 ($\S_{t^-} \cap \A = \emptyset$):}

    In this case, from time $t$, the total contribution from $\A$-tasks to $\H_f$ is at most $\copt - t$ and the total contribution from $\F$-tasks is $\Sigma_{\beta^t} \leq \Sigma_{\beta}$. Therefore we have

    \begin{equation}
        \H_f \leq t + (\copt - t) + \Sigma_{\beta^t} \leq t + (\copt - t) + \Sigma_\beta \leq (1 + \frac{1}{\varphi})\copt = \varphi\copt,
    \end{equation}
    where the last inequality comes from \cref{lem:boundalphabeta}.

    \paragraph{Case 2 ($\S_{t^-} \cap \A \neq \emptyset$ and all tasks from $\S_{t^-} \cap \A$ are eligible):}

    This case gives us all the conditions necessary to invoke \cref{lem:weirdgood}.

    \paragraph{Case 3 ($\S_{t^-} \cap \A \neq \emptyset$ but some tasks in $\S_{t^-} \cap \A$ are ineligible):}

    Let us define $\tau_a$ to be the last arriving task in $\S_{t^-} \cap \A$ such that $\rcopt \leq t_a < t$. If $\tau_a$ doesn't exist then the contribution from $\S_t$ to $\H_f$ is bounded by $\Sigma_\alpha + \Sigma_\beta < \frac{\copt}{\varphi}$. Since the maximum load for all future tasks from $\A$ is $\copt - t$ we get 

    \begin{equation*}
        \H_f \leq t + (\copt - t) + (\Sigma_\alpha + \Sigma_\beta) < \copt + \frac{\copt}{\varphi} = \varphi \copt.
    \end{equation*}

    Thus for the remainder of this case we'll assume that $\tau_a$ exists. Now suppose that $\beta^{t_a} = \emptyset$. This means that from $t$ onward, $\H_f$ does not get any contribution from $\F$. Furthermore, since some task in $\S_t \cap \A$ is not eligible we know that $t < \frac{\copt}{\varphi}$. Thus we have 

    \begin{equation}
        \H_f \leq t + \underbrace{\copt}_{\text{All work comes from $\A$}} \leq \frac{\copt}{\varphi} + \copt = \varphi\copt.
    \end{equation}

    Thus for the remainder of this case we'll assume that $\beta^{t_a} \neq \emptyset$.
    Now consider some ineligible task $\tau_i \in \S_{t^-} \cap \A$. We have that $t_i \leq t_a$ and that $\beta^{t_a} \subseteq \beta^{t_i} \neq \emptyset$. We can now invoke \cref{lem:early_eligible} to bound $t$. Concretely we have $t \leq \max(t_i, \Sigma_\alpha, \varepsilon + \Sigma_\beta - \Sigma_{\beta^{t_i}})$. By assumption the maximum cannot resolve to $t_i$ since $t_i < t$. 

    If the maximum resolves to $\Sigma_\alpha$ we have 

    \begin{equation}
        \H_f 
        \leq t + \underbrace{(\copt + \Sigma_{\beta^{t}})}_{\text{maximum work from $\A$ and $\F$}} \leq \Sigma_\alpha + (\copt + \Sigma_\beta) < \copt + \frac{\copt}{\varphi} = \varphi\copt,
    \end{equation}
    where the last inequality comes from \cref{lem:boundalphabeta}.

    Otherwise the maximum resolves to $\varepsilon + \Sigma_\beta - \Sigma_{\beta^{t_i}}$ and we have 

    \begin{equation}
        \H_f \leq 
        t + (\copt + \Sigma_{\beta^{t}})
        \leq (\varepsilon + \Sigma_\beta - \Sigma_{\beta^{t_i}}) + (\copt + \Sigma_{\beta^{t}}) \leq (\varepsilon + \Sigma_\beta) + \copt \leq \frac{\copt}{\varphi} + \copt = \varphi\copt,
    \end{equation}
    where in the penultimate inequality we used the fact that $\Sigma_{\beta^{t_i}} \geq \Sigma_{\beta^{t}}$ because $t \geq t_i$.
\end{proof}

We are now ready to prove the main theorem, restated here, which requires some case analysis.

\begin{theorem*}[Restatement of \cref{thm:competitive}]
    $\H$ is a $\varphi$-competitive in the Eventually-committing task model.
\end{theorem*}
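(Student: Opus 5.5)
Since tasks are placed on slow machines only when the slow-check $s_i + t \le \varphi\C^t \le \varphi\copt$ holds, every slow machine finishes by $\varphi\copt$. It therefore suffices to show $\H_f \le \varphi\copt$. I would split according to whether $\H$'s fast machine is ever free at some time $t$ with $\rcopt \le t < \H_f$. If it is, \cref{lem:lategaps} gives $\H_f \le \varphi\copt$ at once. So assume the fast machine is busy on the whole interval $[\rcopt, \H_f)$; more precisely, if it is busy at $\rcopt$, look at the task running there. By \cref{lem:nolatesmall} we may also assume that no $\Fsmall$ task starts at or after $\rcopt$.

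In the continuously-busy case I would bound $\H_f$ by writing the fast machine's history as follows. Let $t_0 \le \rcopt$ be the start time of the task $\tau_0$ occupying the fast machine at $\rcopt$, or let $t_0 = \rcopt$ if the machine is idle at that instant. From $t_0$ onward the machine runs without a gap, so
\[
\H_f \le t_0 + f_0 + \Sigma_{\F \cap \S_{\rcopt}} + \sum_{\tau_i\in\A,\ \text{started after } t_0} f_i .
\]
The $\F$-contribution after $\rcopt$ is at most $\Sigma_{\beta^{\rcopt}} \le \Sigma_\beta$, because no $\Fsmall$ task starts late and no $\F$ task arrives late (\cref{lem:nolatearrive}). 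For the $\A$-part I would use $\OPT$'s fast machine, which runs all of $\A$. The tasks of $\A$ that $\H$ runs after $t_0$ are either in $\alpha$ (load $\Sigma_\alpha$) or arrive at or after $\rcopt > t_l$. The latter have total load at most $\copt - \rcopt$, since $\OPT$'s fast machine must process them after their arrival. Combined with \cref{lem:boundalphabeta}, this gives $\H_f \le t_0 + f_0 + \frac{\copt}{\varphi} + (\copt - \rcopt)$. It then remains to show $t_0 + f_0 - \rcopt \le 0$, or at least that it is absorbed by the slack $\varepsilon$, according to whether $\tau_0$ is in $\A$, $\Fbig$ or $\Fsmall$.

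The main obstacle is the task straddling $\rcopt$, together with early $\A$-tasks that are neither in $\alpha$ nor late. I would handle these by cases on the type of $\tau_0$. If $\tau_0 \in \Fsmall$, then $f_0 \le \copt/\varphi^2$ by \cref{lem:setprops}, and it is the last small task, so only $\beta$ and $\A$ work follows. If $\tau_0 \in \Fbig \cup \A$, it was started before $t_l$ or lies in $\alpha \cup \beta$, so its load is already counted in $\Sigma_\alpha + \Sigma_\beta$. The leftover $t_0 \le \rcopt$ then has to be charged against $\OPT$'s fast machine: at time $\rcopt$, $\OPT$ still owes at least $\copt - \rcopt$ of fast work only if late tasks exist, which gives $\H_f \le \rcopt + \Sigma_\alpha + \Sigma_\beta + (\copt - \rcopt)$.

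If the fast machine has no gap after $\rcopt$ and no late $\A$ tasks arrive, I would instead bound $\H_f \le \max(\text{last start}) + \text{remaining}$ directly using eligibility: every task is eligible by time $\copt/\varphi$. Restarting the accounting at the last idle time before $\rcopt$, if there is one, gives the bound $\copt/\varphi + \copt = \varphi\copt$. I expect reconciling the straddling task with the $\varepsilon$ bookkeeping to be the fiddly step.
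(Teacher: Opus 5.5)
Your frame (reduce to $\H_f$, dispose of late idle time via \cref{lem:lategaps}, then split on the type of the task $\tau_0$ running at $\rcopt$) is the paper's. Your $\Fbig\cup\A$ case is essentially its Case~1, modulo one slip: a $\tau_0$ started before $t_l$ is \emph{not} in $\alpha\cup\beta$, so its load is not ``already counted''. You need instead that $\tau_0$, $\alpha$ and $\beta$ all sit on $\OPT(t_l)$'s fast machine, or the paper's bound $\Sigma_{\Fbig\cup\Aearly}\le\copt/\varphi$.

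The genuine gap is the case $\tau_0\in\Fsmall$, which your accounting cannot close. There, $\tau_0$ was the largest task on standby at $t_0$ and no $\Fsmall$ task starts after $\rcopt$. So every task of $\alpha\cup\beta$ arrives after $t_0$ and lies on $\OPT(t_l)$'s fast machine. The best you can extract is $\varepsilon>t_0$, and your bound becomes
\[
\H_f\le t_0+f_0+\left(\frac{\copt}{\varphi}-\varepsilon\right)+(\copt-\rcopt)<\varphi\copt+(f_0-\rcopt).
\]
Nothing you invoke ($f_0\le\copt/\varphi^2$, $t_0<\rcopt$, even eligibility $f_0\le\varphi t_0$) rules out $f_0>\rcopt$, e.g.\ $t_0=f_0/\varphi$ with $\rcopt$ just after $t_0$. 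Without the slack, $f_0\le\copt/\varphi^2$ only yields $\H_f<2\copt$. The weak link is charging \emph{all} $\A$-work arriving after $\rcopt$ as $\copt-\rcopt$.

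The missing idea is to split those late $\A$-tasks at $t_0+f_0$. The ones arriving after $t_0+f_0$ have load at most $\copt-(t_0+f_0)$. For the ones arriving in $[\rcopt,t_0+f_0)$, use a task $\tau_i\in\beta$:
\begin{enumerate}
\item Since $f_0\le\copt/\varphi^2$ and $s_i+t_i>\copt/\varphi$, $\tau_i$ stays on $\OPT(t)$'s fast machine for all $t\le t_0+f_0$; otherwise $\H$'s slow-check would send it to a slow machine.
\item $\tau_i$ still fails the slow-check at a time $t_z\le t_0+f_0$ by which these late tasks have all arrived.
\item $\OPT(t_z)$'s fast machine holds $\alpha$, $\beta$ and these tasks, all arriving after $t_0$. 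Hence $\varphi\,(t_0+\Sigma_\alpha+\Sigma_\beta+\text{their load})<s_i+t_0+f_0\le\copt+f_0$, i.e.\ their load is below $f_0/\varphi+(\varepsilon-t_0)$.
\end{enumerate}
The case $\beta=\emptyset$ is easy separately: $\H_f\le t_0+f_0+(\copt-t_0)$.

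Summing gives $\H_f\le\copt/\varphi+\copt+f_0/\varphi-t_0$. The last step uses exactly $t_0\ge f_0/\varphi$, the eligibility of $\tau_0$. That is the procrastination rule, which your treatment of this case never uses.

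Finally, your closing ``restart at the last idle time before $\rcopt$'' bound is unsupported. It neither shows that this idle time is at most $\copt/\varphi$ nor controls the $\F$-work started after it.
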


\begin{proof}

    As previously mentioned, it suffices to just consider the performance of $\H$'s fast machine. Furthermore by \cref{lem:lategaps} we can assume for the remainder of the proof that $\H$'s fast machine is not free for any $t \geq \rcopt$ (and is thus operating continuously for $t \geq \rcopt$). Importantly this implies that some task is running at $t = \rcopt$, we call this task the ``stuck task" $\tau_u$. We clarify that $\tau_u$ can be a task that starts at $t = \rcopt$, but not a task that finishes at $t = \rcopt$. There are two cases to consider.

    \paragraph{Case 1 ($s_u + t_u > \copt/\varphi$):}

    Define $\Aearly = \{\tau_i \in \A : t_i < \rcopt\}$ and $\Alate = \A \setminus \Aearly$. In this case, by largeness, $\tau_u$ must be a member of one of $\Aearly$, $\Alate$, or $\Fbig$. Our strategy will be to characterize the contribution of these sets to $\H_f$.
    
    Let $t < \rcopt$ be a time when all tasks in $\Fbig \cup \Aearly$ have already arrived; this must exist due to \cref{lem:nolatearrive}. Since $\C^t < \copt/\varphi$ all of these tasks must be on $\OPT(t)$'s fast machine at this time or else $\OPT(t)$'s runtime would be greater than $\frac{\copt}{\varphi}$. Therefore we have $\Sigma_{ \Fbig \cup \Aearly} \leq \copt/\varphi$.

    Notice that $\Alate$ can contribute at most $\copt - \rcopt$ to $\H_f$ since $\OPT$ must complete the work on its fast machine. Furthermore, by \cref{lem:nolatesmall}, we know that no tasks from $\Fsmall$ can contribute to $\H_f$ after $\rcopt$. Thus we have

    \begin{align}
        \H_f &\leq \rcopt + \Sigma_{\Fbig \cup \Aearly} + \Sigma_{\Alate} \\
        &\leq \rcopt + \frac{\copt}{\varphi} + \left(\copt - \rcopt\right) \\
        &\leq \left(1 + \frac{1}{\varphi}\right)\copt \\
        &\leq \varphi \copt.
    \end{align}

    \paragraph{Case 2 ($s_u + t_u \leq \copt/\varphi$):}

    In this case we have $\tau_u \in \Fsmall$ by \cref{lem:setprops}. Let $t^*$ be the time in which $\H$ starts $\tau_u$ on the fast machine. Since $\tau_u \in \Fsmall$, by \cref{lem:nolatesmall}, we have the strict inequality $t^* < \rcopt$ Now we can define some additional task sets (see \cref{fig:case-1}):

    \begin{itemize}
        \item $\Fbig'$: The set of tasks $\tau_i \in \Fbig$ such that $t_i \geq t^*$. Note that by \cref{lem:nolatearrive} we also know $t_i < \rcopt$.
        \item $\A_1$: The set of tasks $\tau_i \in \A$ such that $R(\copt/\varphi) > t_i \geq t^*$.
        \item $\A_2$: The set of tasks $\tau_i \in \A$ such that $t^* + f_u> t_i \geq R(\copt/\varphi)$.
        \item $A_3$: The set of tasks $\tau_i \in \A$ such that $t_i > t^* + f_u$.
    \end{itemize}

           \begin{figure}
    \centering
    \begin{tikzpicture}
    \path[use as bounding box,help lines,draw=none] (-2,-1) grid (6,1.5);

    \draw[thick, -stealth] (0, 0) -- (6, 0);

    \draw[thick,dashed] (0,0) -- (0,0.9);
    \draw[thick,dashed] (1,0) -- (1,0.9);
    \draw[thick,dashed] (3,0) -- (3,0.9);
    \draw[thick,dashed] (4,0) -- (4,0.9);
    \draw[thick,dashed] (5.6,0) -- (5.6,0.9);

    \node at (0,1.35) {$0$};
    \node at (1,1.35) {$t^*$};
    \node at (3,1.35) {$R(\frac{\copt}{\varphi})$};
    \node at (5.6,1.35) {$\copt$};

    \node at (2,0.4) {$\Fbig', \A_1$};
    \node at (3.5,0.4) {$\A_2$};
    \node at (4.8,0.4) {$\A_3$};

    \draw [decorate,decoration={brace,amplitude=5pt},rotate=90,thick] (-0.3,-4) -- (-0.3,-1);

    \node at (-2,0) {\text{$\H$'s fast machine}};

    \draw[draw=black,thick,fill=white, fill opacity=0.5] (1,-0.1) rectangle (4,0.1);
    \node at (2.5,-0.75) {$f_u$};

    \end{tikzpicture}  
        \caption{Main theorem, Case 2, arrival times of tasks relative to the stuck task $\tau_u$.}
        \label{fig:case-1}
    \end{figure}
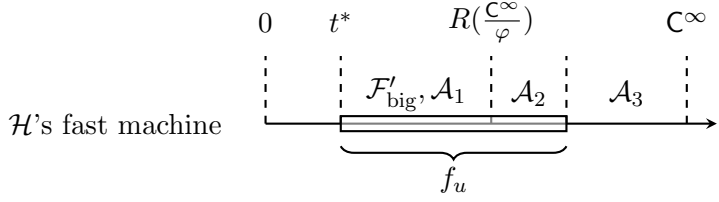

    Our strategy will be to characterize the contribution of all of the tasks assigned to $\H_f$'s fast machine.

     Since $\H$ prioritizes tasks with large $s_j + t_j$ there are no tasks from $\Fbig \cup \A$ on standby at $t^*$. Furthermore by \cref{lem:nolatesmall} we can assume there aren't any task from $\Fsmall$ in $S_{t^*}$. Thus we can conclude that there is no contribution from $\S_{t^*}$ to $\H_f$.

    Next, we recall that $t^* \geq \frac{f_u}{\varphi}$ because of the eligibility condition.

    Now we show that all tasks in $\Fbig'$ are assigned to the fast machine by $\OPT(t)$ for all $t^* \leq t \leq f_u + t^*$. To see this we note that if any task $\tau_i \in \Fbig'$ were assigned to one of $\OPT(t)$'s slow machine at time $t$ then $\H$ would assign $\tau_i$ to a slow machine at time $t$ because 

    \begin{equation}
    \frac{\overbrace{s_i + t}^{\tau_i \text{ on $\H$'s slow machine}}}{\C^{t}} \leq \frac{s_i + t^* + f_u}{s_i + t_i} \leq \frac{s_i + t_i + f_u}{s_i + t_i} \leq \frac{\copt/\varphi + \copt/\varphi^2}{\copt/\varphi} = \varphi,
    \label{eq:fbigfast}
    \end{equation}
    where in the second inequality we have $t^* \leq t_i$ by the definition of $\Fbig'$ and in the last inequality we use the facts that $f_u \leq \frac{\copt}{\varphi^2}$ and $s_i + t_i > \frac{\copt}{\varphi}$ from \cref{lem:setprops}.

    Note that all tasks in $\A$ are assigned to $\OPT$'s fast machine at all times after arrival. This is because they are on $\OPT$'s fast machine at the end of the TAP, and \cref{lem:noslowswap} tells us $\OPT$ never swaps tasks from the slow machine to the fast machine.

    Let $t_y < \rcopt$ be the arrival time of the last task in $\Fbig'\cup\A_1$. We therefore have,  
    
    \begin{equation}
        t^* + \underbrace{\Sigma_{\Fbig'} + \Sigma_{\A_1}}_{\text{on $\OPT(t_y)$'s fast machine}} \leq \C^{t_y} < \frac{\copt}{\varphi}.
        \label{eq:bounda1}
    \end{equation}

    From here we define

    \begin{equation}
        \Delta = \frac{\copt}{\varphi} - (t^* + \Sigma_{\Fbig'} + \Sigma_{\A_1}).
    \end{equation}

    Now we have to bound $\Sigma_{\A_2}$. For this we use the following claim.

    \begin{claim}
        If $\Fbig' = \emptyset$, then $H_f \leq \varphi\copt$.
    \end{claim}

    \begin{proof}
        Suppose that $\Fbig' = \emptyset$. Recall from \cref{lem:nolatearrive}, that no tasks from $\F$ will arrive after $\rcopt$. Furthermore, as we previously argued, there is no contribution to $\H_f$ from $\S_{t^*}$. Thus we actually have that there is no further contribution to $\H_f$ from $\F$.
        
        From $t^*$ onward no more than a total workload of $\copt - t^*$ can arrive from tasks from $\A$ since $\OPT$'s fast machine needs to do this work.        
        
        Thus since $\H$'s fast machine runs continuously after $t \geq t^* + f_u >\rcopt$, we have

        \begin{equation}
            \H_f = (t^* + f_u) + (\copt - t^*) \leq \frac{\copt}{\varphi^2} + \copt \leq \varphi \copt.
        \end{equation}

    \end{proof}

    Now we may assume that $\Fbig' \neq \emptyset$.
    
    Let $t_z \leq t^* + f_u$ be a time by which all tasks in $\A_2$ have arrived. As noted in \eqref{eq:fbigfast}, all of the tasks in $\Fbig'$ remain on $\OPT(t)$'s fast machine for $t^* \leq t \leq f_u + t^*$ and are therefore assigned to the fast machine by $\OPT(t_z)$. Thus at $t_z$, all of $\A_1, \Fbig'$, and $\A_2$ are assigned to the fast machine by $\OPT(t_z)$. Since $\H$ assigns none of the tasks $\tau_i \in \Fbig'$ to a slow machine, we have
    
    \begin{align}
        \varphi \leq \frac{s_i + t_z}{\C^{t_z}} \leq \frac{s_i + (t^* + f_u)}{\tC^{t_z}} \leq \frac{s_i + (t^* + f_u)}{t^* + \Sigma_{\Fbig'} + \Sigma_{\A_1} + \Sigma_{A_2}}.
    \end{align}

    Rearranging the first and last expression and using that fact that $t^* +f_{\Fbig'} + f_{\A_1} = \frac{\copt}{\varphi} - \Delta$ gives

    \begin{equation}
       \Sigma_{\A_2} \leq \frac{(t_i + s_i) - \copt}{\varphi} + \frac{(t^* - t_i)}{\varphi} + \frac{f_u}{\varphi}  + \Delta \leq \frac{f_u}{\varphi} + \Delta,
    \end{equation}
    where in the second inequality we used the fact that $t_i \geq t_*$ and that $\copt \geq s_i + t_i$ from \cref{lem:setprops}.

    Lastly, we have 
    
    \begin{equation}
        \Sigma_{A_3} \leq \copt - (f_u + t^*)
    \end{equation}
    because $\OPT$ has to finish all of these tasks on the fast machine. 
    
    Taken together, we can bound $\H_f$ as follows:

    \begin{align}
        \H_f&=t^* + \Sigma_{\Fbig'} + \Sigma_{\A_1} + \Sigma_{A_2} + f_u + \Sigma_{A_3}  \\
        &\leq \frac{\copt}{\varphi} - \Delta + \frac{f_u}{\varphi} + \Delta + f_u + \copt - (f_u + t^*) \\
        &\leq \left(1+\frac{1}{\varphi}\right)\copt \\
        &=\varphi \copt.
    \end{align}

\end{proof}

\section{Never-committing scheduler}
\label{sec:nevercommit}

Now we move onto the next considered model in which task assignment can be altered.

In this section we show that for any Never-committing scheduler $\M$ and $\varepsilon >0$, there exists a TAP $\T$ for which $\M$ cannot achieve a competitive of $1.5-\varepsilon$ on $\T$. We note that this matches the upper bound of~\cite{itcs_NSAW_2025}.

As a reminder, in this setting a scheduler $\M$ has the ability to restart a task on a different machine, so our proof will have to consider this possibility.

\begin{theorem}
    No online algorithm $\M$ can achieve a competitive ratio of $1.5 - \varepsilon$ for any $\varepsilon > 0$, in the Never-committing task model.
\end{theorem}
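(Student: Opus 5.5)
The plan is an adaptive adversary argument. Fix $\varepsilon>0$ and an online Never-committing scheduler $\M$. We build a TAP $\T_n$ with $n$ tasks, where $n=n(\varepsilon)$ is chosen large. The adversary watches $\M$ and releases the next task only when $\M$ has not yet done something that already costs it a ratio of at least $1.5-\varepsilon$.

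The skeleton generalizes the two-task TAP giving $\frac{1+\sqrt3}{2}$. At time $t_1=0$ release $\tau_1$ with $f_1=1$ and $s_1=\sigma_1$, where $\sigma_1$ is a constant slightly below $1.5$. Then $\OPT(0)$ runs $\tau_1$ on the fast machine and $\C^0=1$. Inductively, suppose $\tau_1,\dots,\tau_k$ have been released. Every earlier task $\tau_j$ with $j<k$ has $s_j+t_j$ at most the current optimum, so $\OPT(t_k)$ puts it on a slow machine. Only $\tau_k$ is ambiguous: it sits on $\OPT(t_k)$'s fast machine, but its slow completion $s_k+t_k$ is just under $1.5\,\C^{t_k}$.

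The adversary's rule at each stage is driven by a threshold time $x_k$.
\begin{itemize}
\item If at some time $x\le x_k$ the scheduler $\M$ moves $\tau_k$ onto a slow machine, or leaves it unstarted on the fast machine, the adversary stops. In the first case $\M$ finishes no earlier than $x+s_k$, while $\OPT$ needs only $\C^{t_k}$. In the second case $\M$ finishes no earlier than $x_k+f_k$.
\item If instead $\M$ keeps $\tau_k$ on its fast machine through $x_k$, the adversary releases $\tau_{k+1}$ at $t_{k+1}=x_k$. This task is large in the ``largeness'' sense and has a substantial fast time $f_{k+1}$. Now $\OPT(t_{k+1})$ moves $\tau_k$ to a slow machine (justified by \cref{lem:noslowswap}-type reasoning) and gives the fast machine to $\tau_{k+1}$.
\item The fast work $\M$ spent on $\tau_k$ between $t_k$ and $x_k$ is now wasted. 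Either $\M$ restarts $\tau_k$ on a slow machine late, or it keeps $\tau_k$ on the fast machine and delays $\tau_{k+1}$.
\end{itemize}

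The parameters $t_k,f_k,s_k$ and the thresholds $x_k$ are fixed by a recurrence. The recurrence is chosen so that each stopping option, and the continuation into stage $k+1$, all give $\M$ a ratio of at least $\rho_k$. It also normalizes $\C^{t_k}$ so the stages are self-similar, with $\rho_k$ increasing in $k$. Concretely, three equalities should hold:
\begin{itemize}
\item $x_k+s_k=\rho\,\C^{t_k}$ (punishing an early move of $\tau_k$ to a slow machine);
\item $x_k+f_k=\rho\,\C^{t_k}$ (punishing procrastination);
\item $\M$'s best completion once $\tau_{k+1}$ arrives, namely $\min\bigl(x_k+s_k,\ \text{finish }\tau_k\text{ then run }\tau_{k+1}\bigr)$, is at least $\rho\,\C^{t_{k+1}}$, unless $\M$ proceeds to stage $k+1$.
\end{itemize}
The two-task bound is the case $n=2$, whose fixed point is $\frac{1+\sqrt3}{2}$. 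Each additional stage pushes the achievable $\rho$ upward. I expect the sequence of achievable ratios to satisfy a monotone recurrence converging to $1.5$. Natural choices are geometrically shrinking slack, or $f_{k+1}$ proportional to the remaining gap $1.5\C^{t_k}-x_k$. The final task $\tau_n$ is not ambiguous, so the last stage is closed directly. Choosing $n$ with $1.5-\rho_n<\varepsilon$ finishes the proof.

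The main obstacle is the bookkeeping over all behaviors of $\M$, not the high-level construction. The recurrence must be robust to several things:
\begin{itemize}
\item $\M$ may idle, restart a task several times, or move $\tau_k$ to a slow machine at a time strictly between the adversary's events.
\item $\M$ may keep several earlier tasks pending on slow machines.
\end{itemize}
I would first reduce to canonical behaviors. A restarted task's previous work is lost, so only the final placement and its start time matter. Hence we may assume that on the fast machine $\M$ works on at most one of the ambiguous tasks at a time, and that the adversary's decision depends only on whether $\tau_k$ occupies the fast machine during $[t_k,x_k]$. I would then verify monotonicity in the switching time $x$. Choosing the parameters so that the limit is exactly $1.5$, rather than some smaller fixed point, is the step I expect to need the most care. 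I would first try solving the recurrence with equality in all three constraints and check numerically that $\rho_n\to1.5$ before writing the closed form.
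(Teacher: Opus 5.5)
Your proposal is a plan rather than a proof. It gives no TAP and no recurrence. The convergence $\rho_n\to1.5$, which you yourself flag as the delicate step and propose to check numerically, is the entire content of the theorem.

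The chain you chose also seems unable to deliver that convergence. First, a slip: moving $\tau_k$ to a slow machine at its arrival costs only $t_k+s_k$. So the early-move condition must be $t_k+s_k\ge\rho\,\C^{t_k}$, not $x_k+s_k=\rho\,\C^{t_k}$. Now consider punishing the swap of $\tau_k$ at $x_k$. This needs $x_k+s_k\ge\rho\,\C^{t_{k+1}}\ge\rho(t_k+s_k)$, hence $x_k-t_k\ge(\rho-1)(t_k+s_k)\ge\rho(\rho-1)\C^{t_k}$. Since $\OPT$ only swaps when $x_k<t_k+f_k\le\C^{t_k}$, this forces $t_k<(1+\rho-\rho^2)\C^{t_k}$, i.e.\ $t_k<\C^{t_k}/4$ at $\rho=3/2$. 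But a punished swap at stage $k$ gives $\C^{t_{k+1}}\le(x_k+s_k)/\rho$. A one-line monotonicity computation then yields $t_{k+1}=x_k\ge(\rho-1)\C^{t_{k+1}}=\C^{t_{k+1}}/2$. So for $\rho>\sqrt{2}$, swaps can be punished at most every other stage. Consider a scheduler that swaps whenever the swap is currently safe and otherwise keeps the task. It only incurs fast-machine backlog at those stages, at most $(1+\rho-\rho^2)\C^{t_k}$ each. Meanwhile $\C$ grows by a factor of at least $\rho$ per stage, so for $\rho$ near $3/2$ this backlog stays a small fraction of the optimum. TAPs of your shape therefore do not appear able to force ratios near $1.5$, however the parameters are tuned.

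The idea you are missing is the paper's: keep a \emph{single} ambiguous task $\tau_0=(1,1.5,0)$, and add non-ambiguous filler. The filler should raise $\C^t$ just fast enough to keep the swap of $\tau_0$ permanently unsafe, while loading $\M$'s fast machine.
\begin{itemize}
\item Use forced-fast tasks $(1/k,\infty,1.5i/k)$ for $1\le i<k/2$. They give $\C^{t_i}=1+i/k$, so moving $\tau_0$ to a slow machine at any time $t$ costs $(t+1.5)/\C^t\ge1.5$.
\item These fillers cost $\OPT$ nothing once $\tau_0$ is on a slow machine, because they fit into its idle fast time before $0.75$.
\item Finally, $\tau_L=(0.75,\infty,0.75)$ yields $\C^{0.75}=1.5$. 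Meanwhile $\M$, which could never swap $\tau_0$, must do $2.25-1/k$ of fast work, a ratio of $1.5-\frac{2}{3k}$.
\end{itemize}
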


\begin{proof}
    Assume for the sake of contradiction that there exists an algorithm $\M$ achieving a competitive ratio of $1.5 - \varepsilon$ for all TAPs $\T$. We give a strategy for an adversary by generating a TAP for $\M$.

    First, let $k$ be any even integer such that $k \geq \frac{1}{\varepsilon}$.

    Consider a TAP $\T$ consisting of 
    \begin{itemize}
        \item $\tau_0 = (1, 1.5, 0)$,
        \item $\tau_i = (\frac{1}{k}, \infty, \frac{1.5i}{k}), \forall i \in [1,\frac k2 -1]$,
        \item $\tau_L = (0.75, \infty, 0.75)$,
    \end{itemize}
    where the use of $s_i, s_L = \infty$ just forces $\M$ to put $\tau_i$ on the fast machine; $\infty$ can be substituted by a large enough constant.
    
    To clarify, this is a TAP that an adversary has in mind, but the adversary can choose to truncate the TAP (not send any remaining tasks from $\T$) whenever they want to. 

    From this TAP description, we can easily compute the optimal offline completion time at each arrival time. Clearly $\C^0 = 1$ by assigning $\tau_0$ to the fast machine. For any $i \in [1, \frac{k}{2}-1]$ we have that 

    \begin{equation}
     \C^{t_i} = \max(\underbrace{1 +\frac{i}{k}}_{\text{work on fast machine}}, \underbrace{\frac{1.5i}{k} + \frac{1}{k}}_{\text{start $\tau_i$ at $t_i$}}) = 1 + \frac{i}{k} \text{ for $i < \frac{k}{2}$},   
    \end{equation}
    which is achieved by assigning all task that have arrived up to $t_i$ to the fast machine. This is apparent because if any of the tasks are assigned to a slow machine, then the completion time would be at least 1.5 which is greater than $1 + \frac{i}{k}$ for $i < \frac{k}{2}$.

    Finally, upon the arrival of $\tau_L$, we have 
    \begin{equation}
     \C^{t_l} = \C^{0.75} = 1.5,   
    \end{equation}
    which is achieved by assigning $\tau_0$ to a slow machine and all other tasks to the fast machine. Now we use these completion times to complete the argument. Our basic strategy will be to show that $\M$ can never put $\tau_0$ on a slow machine.

    Notice that upon the arrival of $\tau_0$, $\M$ cannot put this task on the slow machine, or else the adversary will simply not send any additional tasks from $\T$ and the competitive ratio will be 1.5.

    Upon the arrival of any task $\tau_i$, $\M$ will have to assign $\tau_i$ to the fast machine because its slow runtime is infinite. Furthermore, if $\M$ moves $\tau_0$ to a slow machine at this time the competitive ratio would be given by at least

    \begin{equation}
        \frac{s_0 + t_i}{\C^{t_i}}  = \frac{1.5 + 1.5\frac{i}{k}}{1 + \frac{i}{k}} = 1.5.
    \end{equation}    

 Thus if $\M$ moves $\tau_0$ to a slow machine, then the adversary can simply stop sending tasks and $\M$ will have failed to meet a competitive ratio of $1.5 - \varepsilon$.

    Finally, consider the arrival of $\tau_L$. If $\M$ moves $\tau_0$ to the slow machine at this time, the competitive ratio will be at least 

    \begin{equation}
        \frac{t_l + s_0}{\C^{t_l}} = \frac{0.75 + 1.5}{\C^{0.75}} = \frac{2.25}{1.5} = 1.5,
    \end{equation}
    which does not achieve the desired competitive ratio of $1.5 - \varepsilon$.

    This means that $\tau_0$ must be on $\M$'s fast machine. Moreover, all other tasks must also be on $\M$'s fast machine because their slow runtimes are infinite. Thus all the tasks must be on the fast machine for $\M$, yielding a total completion time of
    
    \begin{equation}
        1 + \left(\frac k2 -1\right)\left(\frac{1}{k}\right) + 0.75 = 2.25 - \left(\frac{1}{k}\right),
    \end{equation}
    and thereby giving a competitive ratio of 

    \begin{equation}
        \frac{2.25 - \frac{1}{k}}{1.5} \geq 1.5 - \frac{\varepsilon}{1.5} > 1.5 - \varepsilon.
    \end{equation}

    This is a contradiction since $\M$ is a $(1.5 - \varepsilon)$-competitive online algorithm. Thus there does not exist any $(1.5-\varepsilon)$-competitive algorithms.
\end{proof}

\bibliographystyle{plain}
\bibliography{references}

\appendix

\end{document}